\documentclass[a4paper,11pt,onecolumn,nopdfoutputerror]{quantumarticle}
\usepackage{amsmath,amssymb,amsthm}
\usepackage{booktabs}
\usepackage{microtype}
\usepackage[colorlinks=true,linkcolor=blue!60!black,citecolor=blue!60!black,urlcolor=blue!60!black,hypertexnames=false]{hyperref}

\newtheorem{theorem}{Theorem}
\newtheorem{lemma}{Lemma}
\newtheorem{proposition}[lemma]{Proposition}

\allowdisplaybreaks

\title{The entanglement of purification is not additive}
\author{Artus Krohn-Grimberghe}
\affiliation{Percivio Ltd.}
\date{August 24, 2026}

\begin{document}

\maketitle

\begin{abstract}
The entanglement of purification $E_P$, introduced by Terhal,
Horodecki, Leung, and DiVincenzo in 2002, measures the total
correlations of a bipartite state by the entanglement needed to purify
it; whether it is additive on tensor powers has been open since its
introduction. In 2012 Chen and Winter argued ``beyond reasonable
doubt'', from numerical evidence, that the two-qubit Werner state at
singlet fraction $f = 1/200$ is a counterexample, and explicitly left
a completely rigorous proof to future work. To our knowledge we give
the first rigorous proof: for this state $W$,
$E_P^\infty(W) \le 0.9663\ldots < 97/100 < E_P(W)$, hence the
entanglement of purification is strictly nonadditive at some finite
tensor power. The lower bound is a finite certificate: analytic
reductions send the Schmidt spectrum of every purification in the
complete complex Stiefel domain to a point satisfying explicit
necessary inequalities in a three-dimensional box, and an
exact-rational subdivision of that box has 25{,}383 leaves, each of
which either excludes nonphysical points or proves the entropy bound.
The upper bound combines an exact symmetric-support endpoint, an
explicit feasible isometry, and a convexity theorem for the
regularized quantity, closed by two directed rational logarithm
certificates. Every computer-assisted certificate reduces to integer
comparisons; a dependency-free program verifies all of them from the
supplementary artifact.
\end{abstract}

\section{Introduction and theorem}

How much entanglement does it take to make a given bipartite quantum
state --- not just its entangled part, but all of its correlations?
Terhal, Horodecki, Leung, and DiVincenzo \cite{THLD} answered with the
\emph{entanglement of purification}. For a state $\rho_{AB}$ shared
between two parties, consider every pure state $|\Phi\rangle$ on an
enlarged system $AA'BB'$ whose reduction to $AB$ is $\rho$ (every such
$|\Phi\rangle$ is called a \emph{purification} of $\rho$, with the
first party holding $AA'$ and the second $BB'$). Define
\[
E_P(\rho_{AB}) \;=\; \min_{|\Phi\rangle:\ \mathrm{Tr}_{A'B'}\Phi=\rho}
S(AA')_\Phi ,
\]
the least entanglement entropy of any purification across the
$AA'\,|\,BB'$ cut. Here $S(\tau) = -\mathrm{Tr}\,\tau\log_2\tau$ is
the von Neumann entropy; all logarithms in this paper are base $2$.
The operational meaning proved in \cite{THLD} concerns many copies:
the \emph{regularized} quantity
\[
E_P^\infty(\rho) \;=\; \lim_{n\to\infty}\frac{E_P(\rho^{\otimes n})}{n}
\]
equals the asymptotic entanglement cost of creating $\rho$ by local
operations and a vanishing rate of classical communication.

\begingroup\sloppy
The additivity question asks whether one copy already tells the whole
story: is $E_P(\rho^{\otimes n}) = n\,E_P(\rho)$? Tensoring feasible
purifications shows $E_P(\rho^{\otimes n}) \le n\,E_P(\rho)$ always
(Section~\ref{sec:reduction}), so by Fekete's lemma the limit above
exists and equals $\inf_n E_P(\rho^{\otimes n})/n \le E_P(\rho)$. For
a fixed state $\rho$, equality $E_P^\infty(\rho) = E_P(\rho)$ is
therefore equivalent to additivity along all tensor powers of that
state: if every power is additive the infimum is $E_P(\rho)$;
conversely, if the infimum equals $E_P(\rho)$, then for every $n$ the
infimum definition gives
$n\,E_P(\rho) = n\,E_P^\infty(\rho) \le E_P(\rho^{\otimes n})$, and
subadditivity gives the reverse, so every power is additive. A strict
gap $E_P^\infty(\rho) < E_P(\rho)$ thus means that some finite tensor
power has a strictly smaller per-copy cost.\par\endgroup

Whether such a gap exists has been open since 2002. Bagchi and Pati
\cite{BagchiPati} proved that $E_P$ is subadditive on tensor products
of different states and posed strict subadditivity as an open
question. Chen and Winter \cite{ChenWinter} made the question
concrete: they proved that $E_P^\infty - S$ is convex on ensembles
(their Theorem~3, restated as convexity in their Corollary~4),
computed numerical upper bounds along the two-qubit Werner family, and
found that at singlet fraction $f = 0.005$ the convexity bound forces
$E_P^\infty(W(0.005)) \le 0.9663$ (numerical), while their
local-search numerics for the single-copy value stayed near $0.99$.
They concluded that nonadditivity holds ``beyond reasonable doubt''
--- the phrase is in their title --- but that a gap between numerics
and proof remained, and wrote: ``We leave a completely rigorous proof
of the non-additivity of entanglement of purification to future
work''. The question has stayed open in the literature since: a 2025
status statement describes the additivity of the entanglement of
purification as unknown \cite{Khanian}, and a 2026 paper proving
additivity results for a R\'enyi-2 variant still treats the ordinary
von Neumann Werner case as numerical evidence \cite{FarajiKhanian}.

There is an instructive precedent. For the entanglement of formation,
additivity was eventually refuted by Hastings \cite{Hastings} via
random channel constructions and the equivalence with the additivity
conjectures for channel capacities. No comparable equivalence
machinery is available for $E_P$: it is a minimization over
purifications with growing ancillas rather than a convex-roof over
decompositions, and Hastings's counterexample technique has no known
transfer to it. This is one reason the Werner-state program of Chen
and Winter --- a single concrete two-qubit state with an explicit
numerical target --- remained the most promising route.

This paper completes that program.

\begin{theorem}\label{thm:main}
Let $\Psi_-$ be the projector onto the two-qubit singlet state and
\[
W(f) \;=\; f\,\Psi_- + \frac{1-f}{3}\,(I - \Psi_-),
\qquad f_0 = \frac{1}{200}.
\]
Then
\[
E_P^\infty(W(f_0)) \;\le\; T_+ \;<\; \frac{97}{100} \;<\; E_P(W(f_0)),
\]
where $T_+$ is the exact rational
\[
T_+ =
\frac{966326415938587973999093968736440028761884542960107996118227}
{10^{60}} = 0.96632\ldots
\]
Consequently the entanglement of purification is strictly nonadditive:
for some finite $n$,
\[
E_P\!\left(W(f_0)^{\otimes n}\right) \;<\; n\,E_P(W(f_0)).
\]
The separation is exact, not a rounded decimal:
\[
\frac{97}{100} - T_+ =
\frac{3673584061412026000906031263559971238115457039892003881773}
{10^{60}} \;>\; 0 .
\]
\end{theorem}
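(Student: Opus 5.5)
The proof splits into three independent parts. First, an upper bound $E_P^\infty(W(f_0)) \le T_+$. Second, the rational comparison $T_+ < 97/100$, which is a check of the stated integer numerators. Third, a lower bound $E_P(W(f_0)) > 97/100$. Nonadditivity then follows from the Fekete argument in the introduction: $E_P^\infty = \inf_n E_P(W^{\otimes n})/n < E_P(W)$, so some finite $n$ has $E_P(W^{\otimes n}) < n\,E_P(W)$.

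\textbf{Upper bound.} I will use the Chen--Winter convexity theorem: $E_P^\infty - S$ is convex on ensembles. Write $W(f_0)$ as a convex combination $p\,\sigma_1 + (1-p)\,\sigma_2$ of two endpoint states. The first endpoint is the state supported on the symmetric subspace, $W(0) = (I-\Psi_-)/3$. The second is a Werner state $W(f_1)$ with larger $f_1$. This gives
\[
E_P^\infty(W(f_0)) \le S(W(f_0)) + p\bigl(E_P^\infty(\sigma_1)-S(\sigma_1)\bigr) + (1-p)\bigl(E_P^\infty(\sigma_2)-S(\sigma_2)\bigr).
\]
For $\sigma_1$ I will bound $E_P$ exactly, for instance by purifying the symmetric state through the symmetric-subspace isometry. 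For $\sigma_2$ I will use $E_P^\infty \le E_P$ with an explicit feasible isometry $A'B'$, computed numerically and then rationalized. The entropies involved are entropies of explicit rational spectra. I will bound them above by directed rational logarithm bounds: truncated series for $\log_2$ with remainder estimates, rounded outward. The result is the rational number $T_+$.

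\textbf{Lower bound.} This is the main obstacle, because it requires controlling a minimization over all purifications with unbounded ancilla. I would proceed as follows.
\begin{enumerate}
\item Reduce to purifications of the form $(I\otimes V)|\psi\rangle$, where $|\psi\rangle$ is the canonical purification on $AB\,C$ and $V: C \to A'B'$ is an isometry. By Carath\'eodory-type arguments the ancilla dimensions can be bounded, so that $S(AA')$ is a function on a compact complex Stiefel manifold.
\item Exploit the $U\otimes U$ symmetry of Werner states, together with concavity of entropy under twirling or Schmidt-spectrum majorization arguments. This should reduce the relevant quantity, the spectrum of $\rho_{AA'}$, to a few real parameters satisfying explicit necessary inequalities: trace, positivity and marginal constraints inherited from $\rho_A = I/2$ and from the Werner structure. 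The aim is a three-parameter box.
\item Prove $S > 97/100$ on the feasible part of the box by exact rational branch-and-bound. On each cell, either an interval certificate shows a necessary inequality is violated, or a lower bound on the entropy (monotonicity and concavity bounds with rational log bounds) exceeds $97/100$. Cells where neither holds are subdivided until all of them close.
\end{enumerate}
The delicate part is step 2: the reduction must be valid over all complex isometries, not only real or symmetric ones, so that the certificate covers the whole domain. The final subdivision is then purely mechanical integer arithmetic.
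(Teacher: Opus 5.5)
There is a genuine gap in the lower bound. Your architecture matches the paper's: reduce to a compact complex Stiefel domain, derive necessary inequalities on the Schmidt spectrum, and close a three-dimensional box by exact branch-and-bound. But step 2, which you flag as delicate, is where the whole proof lives, and what you propose there would not work.

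\begin{itemize}
\item The constraints you name (trace, positivity, $\rho_A=\rho_B=I/2$) restrict the Schmidt spectrum not at all. Any spectrum $p$ is realized by $\sum_\alpha\sqrt{p_\alpha}\,|x_\alpha\rangle|y_\alpha\rangle$ with each $|x_\alpha\rangle$ maximally entangled with $A$ and each $|y_\alpha\rangle$ with $B$. So the point $p=(1,0,\dots)$ with $H=0$ survives every check, and your step 3 can never close the cells around it.
\item Generic bounds do not help either: $E_P\ge I(A{:}B)/2\approx0.19$ is far from $97/100$, which sits within a few percent of the conjectured true value, so the inequalities must be nearly tight.
\item Symmetry cannot shrink the domain. $E_P(\rho)$ is the minimum of $S(AA')_\omega$ over extensions $\omega_{ABA'}$, a concave function on a convex $U\otimes U$-invariant set. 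Twirling an extension can only raise its entropy, so the minimizer need not be symmetric.
\end{itemize}

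What is missing are two genuinely physical inequalities valid for every complex isometry.
\begin{itemize}
\item \emph{Orientation polynomial $G\ge0$.} Compressing $\mathrm{Tr}[W(\sigma_j\otimes\sigma_k)]=t\delta_{jk}$ to the leading Schmidt pair gives $tI_3=sAC_\delta B^T+sab^T+R$. Here $A,B$ are Bloch parts of qubit channels, $\det C_\delta<0$, $\|R\|_{\mathrm{op}}\le\varepsilon+2sz$ and $\|a\|,\|b\|\le\varepsilon/s$. Let $D=sAC_\delta B^T$ and $\eta=\varepsilon+2sz+\varepsilon^2/s$. If $t>\eta$, then $\det D\ge(t-\eta)^3>0$ forces exactly one channel to reverse orientation. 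Such channels satisfy $|\det L|\le1/27$.
\item \emph{Spectral inequality $\sqrt{p_1}\le\sqrt{p_2}+\sqrt{p_3}+\sqrt{Q}+1/\sqrt{50}$.} The triplet projection of any purification is exactly swap-invariant, so its $A'A$ reduction is symmetrically extendible with maximally mixed qubit marginal. The Chen--Ji--Kribs--L\"utkenhaus--Zeng criterion then factors into a Heron-type product giving $\sqrt{q_1}\le\sqrt{q_2}+\sqrt{q_3}+\sqrt{Q_\sigma}$. Mirsky's inequality transfers this from $S$ to $M=S+E$ at cost $2\|E\|_F=2\sqrt{f_0}$.
\end{itemize}
Only with these, plus $Q\le5p_3$ and the entropy merge, does the subdivision terminate.

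On the upper bound your route is the paper's, with three fixes.
\begin{itemize}
\item The paper does not use Chen--Winter's convexity theorem as a black box, since their proof is an outline in an unpublished preprint. It proves it in Appendix A$'$ from $E_P^\infty=E_{LOq}$, dilution with $O(\sqrt n)$ communication, typicality, and the Ahlswede--Winter operator Chernoff bound.
\item A numerically found isometry that is then ``rationalized'' is generally not an exact isometry, and its $\rho_{AA'}$ spectrum is not rational. The paper instead uses the Bell-to-Bell map $V|i\rangle=|\Psi_i\rangle_{A'B'}$ at $f_1=1/100$. Its marginal spectrum is exactly $\{s,s,s,1-3s\}$ with $s=(17-\sqrt{33})/200$, certified by exact power sums over $\mathbb{Q}[\sqrt{33}]$.
\item $S(\sigma_1)$ and $S(\sigma_2)$ enter your bound with a minus sign, so they need certified lower bounds, not upper bounds.
\end{itemize}
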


Because the proof works through the infimum characterization, it
guarantees only that \emph{some} finite $n$ has a smaller per-copy
value; it identifies neither $n = 2$ nor any upper bound on $n$, and
it produces no exact value of $E_P$ or $E_P^\infty$ at the
nonadditivity parameter $f_0 = 1/200$. (At the endpoint $f = 0$ the
situation is different: Section~\ref{sec:upper} does prove the exact
values $E_P(W(0)^{\otimes n}) = n$ and $E_P^\infty(W(0)) = 1$, with
the trivial-ancilla purification optimal.)

To our knowledge, this is the first rigorous proof of strict
nonadditivity of the ordinary (von Neumann) entanglement of
purification. It completes the concrete Werner-state program of Chen
and Winter: the phenomenon, the state family, the convexity mechanism,
and the parameter $f_0 = 1/200$ are all theirs; the contribution here
is the rigorous certification that converts ``beyond reasonable
doubt'' into a theorem.

Two kinds of argument cooperate in the proof, and we keep their
boundary explicit throughout. Analytic arguments
(Sections~\ref{sec:reduction}, \ref{sec:inequalities},
\ref{sec:upper} and Appendices~A, A$'$) reduce both bounds to finitely
many exact-rational comparisons. Two short computer programs
(Appendix~B) then verify those comparisons; every accepting decision
they make is an exact integer or rational comparison --- many checks
are exact equalities or non-strict inequalities (e.g.\ $U_w \le U_+$),
while the six leaf predicates of Section~\ref{sec:certificate} are
strict. The programs verify the finite calculations; the paper proves
why those calculations apply to every purification.

Notation used throughout: $T_+$ is the rational upper endpoint above;
$U_w$ denotes an exact entropy expression arising from an explicit
purification at $f = 1/100$, and $U_+$ its certified rational upper
bound (Section~\ref{sec:upper}).

\section{Purifications and a scalar reduction}\label{sec:reduction}

\paragraph{The state.}
At $f_0 = 1/200$ the Werner state has exact spectrum
\[
\mathrm{spec}\,W(f_0) =
\left(\frac{1}{200},\ \frac{199}{600},\ \frac{199}{600},\
\frac{199}{600}\right)
\]
in the Bell basis, both single-qubit marginals equal to $I/2$, and
Pauli correlation matrix $t\,I_3$: writing
$\sigma_1,\sigma_2,\sigma_3$ for the Pauli matrices,
\[
\mathrm{Tr}\!\left[W(f_0)\,(\sigma_j\otimes\sigma_k)\right]
= t\,\delta_{jk},
\qquad
t = -f_0 + \frac{1-f_0}{3} = \frac{49}{150}.
\]

\paragraph{The purification domain.}
The minimization defining $E_P$ ranges over purifications with
ancillas $A'$, $B'$ of unbounded dimension. For a state of rank $r$,
Ibinson, Linden, and Winter proved (Corollary~3 of \cite{ILW}) that
some optimal purification uses ancillas of dimension at most $r$ each.
Here $r = 4$, so it suffices to minimize over ancillas
$A' \cong B' \cong \mathbb{C}^4$. Padding any smaller purification
with unused ancilla dimensions changes nothing, so the domain can be
taken to be \emph{exactly} the set of isometries
\[
V:\ \mathbb{C}^4 \longrightarrow \mathbb{C}^4\otimes\mathbb{C}^4,
\qquad V^\dagger V = I_4,
\]
applied to a fixed standard purification of $W(f_0)$: this set of
$16\times 4$ complex matrices with orthonormal columns is the complex
Stiefel manifold $\mathrm{St}_{\mathbb C}(4,16)$. Dimension $4$ is a
\emph{sufficient} padded domain here, not a claimed minimal dimension.
The reduction is spelled out in Appendix~A.1.

\paragraph{Schmidt variables.}
Fix any $V \in \mathrm{St}_{\mathbb C}(4,16)$ and write the resulting
pure state in Schmidt form across the cut $AA'\,|\,BB'$: there are
orthonormal families $\{|x_\alpha\rangle\}$ on $AA'$ and
$\{|y_\alpha\rangle\}$ on $BB'$ and numbers
$p_1 \ge p_2 \ge \cdots \ge p_r > 0$ summing to one with
\[
|\Phi_V\rangle = \sum_{\alpha=1}^{r}\sqrt{p_\alpha}\,
|x_\alpha\rangle|y_\alpha\rangle .
\]
Both sides of the cut have dimension $8$, so $r \le 8$; pad the
spectrum with zeros to a fixed-length vector $p = (p_1, \ldots, p_8)$,
$p_1 \ge \cdots \ge p_8 \ge 0$, so that entries such as $p_2, p_3$ are
defined for every purification regardless of its Schmidt rank. The
objective is exactly the Shannon entropy of the Schmidt spectrum:
$S(AA')_{\Phi_V} = H(p) = \sum_\alpha h(p_\alpha)$ with
$h(x) = -x\log_2 x$ and $h(0) = 0$. No restriction is imposed on $V$
within the domain: it is an arbitrary complex isometry, and ordering
the $p_\alpha$ merely fixes labels. All spectral inequalities proved
in Section~\ref{sec:inequalities} are statements about purifications
in this sufficient compact domain $\mathrm{St}_{\mathbb C}(4,16)$; by
the Appendix~A.1 reduction, the minimum of $H(p)$ over this domain
equals $E_P(W(f_0))$, which is all the lower bound needs.

\paragraph{Roadmap.}
The whole lower-bound argument is a pincer on the ordered spectrum.
Section~\ref{sec:inequalities} proves that \emph{every} spectrum
arising from a physical purification satisfies four explicit
inequalities in the variables
\[
p_1,\quad p_2,\quad p_3,\quad Q = \sum_{\alpha\ge4}p_\alpha
= 1 - p_1 - p_2 - p_3 .
\]
Section~\ref{sec:certificate} then certifies, by an exact finite
subdivision of the three-dimensional box
$[1/2,1]\times[0,1/2]\times[0,1/3]$ in the coordinates
$(p_1,p_2,p_3)$, that every point compatible with those inequalities
has $H > 97/100$. We emphasize the division of labor: the subdivision
lives in a three-dimensional box of \emph{spectra}, reached by
analytic reduction; it does not partition or parameterize the Stiefel
manifold itself. Section~\ref{sec:upper} proves the upper bound on
$E_P^\infty$ by an explicit protocol chain, and
Section~\ref{sec:conclusion} collects scope and context.

\section{Necessary spectral inequalities}\label{sec:inequalities}

This section states the inequalities and displays the mechanism of
each proof; complete proofs are in Appendix~A. Throughout, a
\emph{qubit channel} is a completely positive trace-preserving linear
map on qubit states, and its \emph{Bloch form} is the affine map
$r \mapsto c + Lr$ it induces on Bloch vectors, where a qubit state
with Bloch vector $r\in\mathbb{R}^3$, $\|r\|\le1$, is
$(I + r\cdot\sigma)/2$. We write $s = p_1 + p_2$,
$\varepsilon = 1 - s$, and $z = p_3$.

\subsection{The orientation inequality}

\begin{lemma}[orientation cap]\label{lem:orientation}
If $L$ is the $3\times3$ linear Bloch part of a qubit channel, then
$|\det L| \le 1$; and if $\det L < 0$, then $|\det L| \le 1/27$.
\end{lemma}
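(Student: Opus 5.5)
The plan is to reduce to a diagonal, unital channel by unitary symmetries and Pauli twirling, and then read both bounds off the positivity of its Choi matrix.

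\textbf{Step 1: the bound $|\det L|\le 1$.} Let $\Phi$ be a qubit channel with Bloch form $r\mapsto c+Lr$. Since $\Phi$ maps states to states, it maps the unit Bloch ball into itself. For any unit vector $r$, the images of $r$ and $-r$ are $c+Lr$ and $c-Lr$, and both lie in the ball. Hence $\|2Lr\|\le 2$, so the operator norm of $L$ is at most $1$. Every singular value is then at most $1$, and $|\det L|\le 1$.

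\textbf{Step 2: diagonalize $L$ with unitaries.} Take a real singular value decomposition $L=R_1 D R_2$ with $R_1,R_2\in SO(3)$ and $D=\mathrm{diag}(\lambda_1,\lambda_2,\lambda_3)$. If $\det L<0$, absorb signs into $D$, so the $\lambda_i$ are signed and $\lambda_1\lambda_2\lambda_3=\det L$. Every element of $SO(3)$ is the Bloch action of a unitary conjugation. Therefore
\[
\Phi'(\rho)=U_1^\dagger\,\Phi\!\left(U_2^\dagger\rho U_2\right)U_1
\]
is again a channel, with Bloch form $r\mapsto R_1^{T}c+Dr$. This leaves $\det$ unchanged.

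\textbf{Step 3: remove the translation by Pauli twirling.} Conjugation by $\sigma_k$ acts on Bloch vectors as $R_k\in\{I,\mathrm{diag}(1,-1,-1),\mathrm{diag}(-1,1,-1),\mathrm{diag}(-1,-1,1)\}$. Each $R_k$ is diagonal and commutes with $D$. So the channel $\rho\mapsto\sigma_k\Phi'(\sigma_k\rho\sigma_k)\sigma_k$ has Bloch form $r\mapsto R_kR_1^Tc+Dr$. Averaging these four channels gives a channel $\Phi''$ with the same $D$. Its translation is $\tfrac14\sum_k R_k R_1^Tc=0$, because $\sum_k R_k=0$. Hence $\Phi''$ is a unital channel with diagonal Bloch part $D$.

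\textbf{Step 4: the Choi constraints.} For the unital diagonal map
\[
\Phi''(\rho)=\tfrac12\Bigl[\mathrm{Tr}\rho\,I+\sum_j\lambda_j\,\mathrm{Tr}(\sigma_j\rho)\,\sigma_j\Bigr],
\]
the Choi matrix is diagonal in the Bell basis. Its eigenvalues are $\tfrac14(1+\lambda_1+\lambda_2+\lambda_3)$ together with the three values $\tfrac14(1+\lambda_a-\lambda_b-\lambda_c)$, one for each choice of $a$ with $\{b,c\}$ the other two indices. I will check this by direct computation, using $\sigma_j^{T}=\pm\sigma_j$. Complete positivity is therefore equivalent to $D$ lying in the tetrahedron with vertices $(1,1,1)$, $(1,-1,-1)$, $(-1,1,-1)$, $(-1,-1,1)$. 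This is the Fujiwara--Algoet / King--Ruskai description. Conveniently, only this necessary direction is needed here.

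\textbf{Step 5: maximize the negative determinant.} Suppose $\lambda_1\lambda_2\lambda_3<0$. Then either all three $\lambda_i$ are negative, or exactly one is. Flipping the signs of two coordinates maps the tetrahedron to itself (it is conjugation by a Pauli) and preserves the product. This reduces the case of exactly one negative entry to the case of all three negative. Write $\lambda_i=-\mu_i$ with $\mu_i>0$. The tetrahedron constraint $1+\lambda_1+\lambda_2+\lambda_3\ge0$ becomes $\mu_1+\mu_2+\mu_3\le1$. By AM--GM,
\[
|\det L|=\mu_1\mu_2\mu_3\le\Bigl(\tfrac{\mu_1+\mu_2+\mu_3}{3}\Bigr)^3\le\tfrac1{27}.
\]
Equality holds at $\mu_i=\tfrac13$, which is the Bloch part of the optimal universal-NOT map.

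\textbf{Main obstacle.} I expect the main difficulty to be Step 3, the justification that the translation $c$ can be removed without losing complete positivity or changing $D$. The twirl argument above is the route I would use. The remaining care is bookkeeping: the sign conventions in the SVD must keep $\det$ exactly equal to $\lambda_1\lambda_2\lambda_3$, and the Bell-basis eigenvalue computation for the Choi matrix must be done correctly.
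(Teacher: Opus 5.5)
Your proof is correct. It shares the paper's skeleton: a contraction bound, symmetrization to a unital diagonal channel, the Pauli/Choi positivity conditions, then AM--GM. The symmetrization step, however, is genuinely different.

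The paper polar-decomposes $L=OP$ and averages over the full Haar measure of $SO(3)$, pre-rotating by $R^T$ and post-rotating by $ORO^T$. This single average both removes the translation and isotropizes $P$ to $\alpha I$ with $\alpha=\mathrm{Tr}\,P/3$. After one more pair of rotations, the paper faces the one-parameter channel $\mathrm{diag}(\alpha,\alpha,-\alpha)$ and needs only the single Choi eigenvalue $(1-3\alpha)/4\ge 0$.

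You instead take a signed SVD with proper rotations and apply a finite four-element Pauli twirl. This removes $c$ while leaving all three signed singular values intact. You then invoke the full Fujiwara--Algoet tetrahedron and use the two-sign-flip symmetry to reach the all-negative case.

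Both routes arrive at the same key inequality. For $\det L<0$, the singular values of $L$ sum to at most $1$: the paper's $\mathrm{Tr}\,P\le 1$ is exactly your $\mu_1+\mu_2+\mu_3\le 1$, and the same AM--GM step follows.

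\begin{itemize}
\item The paper's route avoids SVD sign bookkeeping and any case split. Its cost is the continuous twirl identity $\int RPR^T\,dR=(\mathrm{Tr}\,P/3)I$ and averaging channels over a continuum.
\item Your route is more elementary, since it uses only a finite group average. It also retains more information: because the Pauli twirl preserves $D$, you get the stronger intermediate fact that the signed singular values of \emph{any} qubit channel, unital or not, lie in the tetrahedron. The Haar twirl discards everything except $\mathrm{Tr}\,P$.
\end{itemize}

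One cosmetic point: you use $R_1$ both for an SVD factor and for the Bloch action $R_k$ of $\sigma_1$ (the case $k=1$). Rename one of them.
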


The mechanism: polar-decompose $L = OP$ with $O$ improper orthogonal
and $P > 0$, average the channel over rotations (which removes the
translation $c$ and isotropizes $P$ to $\alpha I$ with
$\alpha = \mathrm{Tr}\,P/3$), and read off from the four Pauli
probabilities of the averaged channel that complete positivity forces
$\alpha \le 1/3$; the arithmetic--geometric mean inequality then gives
$\det P \le \alpha^3 \le 1/27$. The full calculation, including why
the post-rotation stays proper and both determinant cases, is
Appendix~A.4.

To use Lemma~\ref{lem:orientation} we must locate qubit channels
inside an arbitrary purification. The two leading Schmidt vectors on
each side span a two-dimensional ``seed'' space; compressing the
physical qubit's Pauli operators to the Schmidt bases converts the
exact correlation identity
$\mathrm{Tr}[W(\sigma_j\otimes\sigma_k)] = t\delta_{jk}$ into the
matrix decomposition
\[
t\,I_3 \;=\; s\,A\,C_\delta\,B^{T} \;+\; s\,a b^{T} \;+\; R ,
\]
whose ingredients are: $A$ and $B$, the linear Bloch parts of two
qubit channels built from the leading Schmidt vectors of the two
sides; the diagonal matrix $C_\delta$ with diagonal entries
$\sqrt{1-\delta^2}$, $-\sqrt{1-\delta^2}$, $1-\delta^2$, where
$\delta = (p_1-p_2)/s$, which records how unevenly
the two leading Schmidt weights are split; the vectors $a,b$, the
output Bloch vectors of those channels on the normalized leading-two
seed state; and a remainder $R$ collecting every term that touches
Schmidt index $3$ or higher. Appendix~A.5 constructs the channels and
proves the identity.

\begin{lemma}[remainder bound]\label{lem:remainder}
In the decomposition above,
\[
\|R\|_{\mathrm{op}} \le \varepsilon + 2sz,
\qquad
\|a\|,\ \|b\| \le \frac{\varepsilon}{s},
\]
so with $D = sAC_\delta B^T$,
\[
\|D - tI_3\|_{\mathrm{op}} \;\le\;
\eta := \varepsilon + 2sz + \frac{\varepsilon^2}{s}.
\]
\end{lemma}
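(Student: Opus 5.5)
The plan is to expand the correlation identity $\mathrm{Tr}[W(\sigma_j\otimes\sigma_k)]=t\delta_{jk}$ directly in the Schmidt basis of $|\Phi_V\rangle$ and to split the double sum over Schmidt indices by whether both indices lie in the seed block $\{1,2\}$. For unit vectors $u,v\in\mathbb{R}^3$, let $X$ and $Y$ be the compressions of $u\cdot\sigma$ (acting on $A$) and $v\cdot\sigma$ (acting on $B$) to the Schmidt families $\{|x_\alpha\rangle\}$ and $\{|y_\alpha\rangle\}$. These are Hermitian contractions. Then
\[
u^{T}(tI_3)v=\sum_{\alpha,\beta}\sqrt{p_\alpha p_\beta}\,X_{\alpha\beta}Y_{\alpha\beta}
\]
(up to the complex conjugation fixed in Appendix~A.5). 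The block $\alpha,\beta\le2$ is, by the construction of Appendix~A.5, exactly $u^T(sAC_\delta B^T+s\,ab^T)v$. Hence $u^TRv$ is the sum over pairs with $\max(\alpha,\beta)\ge3$, and it suffices to bound this sum uniformly in $u$ and $v$.

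\emph{Step 1: the tail--tail block.} For $\alpha,\beta\ge3$, write the sum as $\mathrm{Tr}(\sqrt{P_r}X\sqrt{P_r}Y^{T})$ with $P_r=\mathrm{diag}(p_3,\dots,p_8)$. By H\"older,
\[
\bigl|\mathrm{Tr}(\sqrt{P_r}X\sqrt{P_r}Y^{T})\bigr|\le\|\sqrt{P_r}X\sqrt{P_r}\|_1\,\|Y\|_{\mathrm{op}}\le\mathrm{Tr}\,P_r=\varepsilon.
\]

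\emph{Step 2: the cross blocks.} Here $\alpha\le2<\beta$ or the reverse, and I aim for $sz$ per block, giving $2sz$ in total. The available inputs are $p_\beta\le p_3=z$ for $\beta\ge3$, $p_1+p_2=s$, and the row and column bounds $\sum_\beta|X_{\alpha\beta}|^2\le1$ and $\sum_\beta|Y_{\alpha\beta}|^2\le1$ coming from $\|X\|,\|Y\|\le1$. I would first try the weighted Cauchy--Schwarz inequality
\[
\Bigl|\sum_{\alpha\le2<\beta}\sqrt{p_\alpha p_\beta}\,X_{\alpha\beta}Y_{\alpha\beta}\Bigr|\le\Bigl(\sum p_\alpha p_\beta|X_{\alpha\beta}|^2\Bigr)^{1/2}\Bigl(\sum\tfrac{1}{p_\alpha p_\beta}\cdots\Bigr)^{1/2},
\]
but the naive weightings give $\sqrt{2sz}$ or $\sqrt{s\varepsilon}$ rather than $sz$.

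This step is the main obstacle. A bound linear in $z$ (rather than $\sqrt z$) suggests that the cross terms in Appendix~A.5 are organized with an extra factor: for instance, normalizing the seed-channel Kraus data by $\sqrt{s}$, so that each cross entry carries $\sqrt{p_\alpha}\cdot\sqrt{p_\beta}\le\sqrt{s}\sqrt{z}$ twice. I would therefore reexamine exactly how $A$, $B$, $a$, $b$ absorb the seed normalization, and aim to write each cross contribution as $\langle\xi|M|\zeta\rangle$ with $\|\xi\|^2\le sz$ and $\|\zeta\|\le1$. If the clean constant $2sz$ resists this, I would look for a mixed bound of the form $2sz$ plus a term already dominated by the $\varepsilon$ budget.

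\emph{Step 3: the seed Bloch vectors and assembly.} The reduced state of qubit $A$ equals $I/2$. It decomposes as the seed contribution $s\cdot\tfrac12(I+a\cdot\sigma)$ plus a positive remainder of trace $\varepsilon$, coming from Schmidt indices $\ge3$ and from the cross terms, which are traceless by orthogonality. Comparing traceless parts gives $s\|a\|\le\varepsilon$, and symmetrically $s\|b\|\le\varepsilon$. Thus $\|s\,ab^T\|_{\mathrm{op}}\le s(\varepsilon/s)^2=\varepsilon^2/s$. Finally, since $D-tI_3=-(s\,ab^T+R)$, the triangle inequality gives
\[
\|D-tI_3\|_{\mathrm{op}}\le\|R\|_{\mathrm{op}}+\|s\,ab^T\|_{\mathrm{op}}\le\varepsilon+2sz+\frac{\varepsilon^2}{s}=\eta.
\]
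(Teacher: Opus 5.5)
Your Steps~1 and~3 are sound. The tail--tail block is at most $\varepsilon$ by H\"older, and the marginal cancellation gives $\|a\|,\|b\|\le\varepsilon/s$. One small correction: $\rho_{AA'}=\sum_\alpha p_\alpha|x_\alpha\rangle\langle x_\alpha|$ is diagonal in the Schmidt basis, so $\rho_A$ has no seed--tail cross terms at all; the remainder is simply $\sum_{\alpha\ge3}p_\alpha\mathrm{Tr}_{A'}|x_\alpha\rangle\langle x_\alpha|\succeq0$.

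The gap is Step~2, and it cannot be filled, because the target $|R_{\mathrm{cross}}|\le 2sz$ is false, and with it $\|R\|_{\mathrm{op}}\le\varepsilon+2sz$. Your suspicion is correct. The honest estimate is
\[
|R_{\mathrm{cross}}|\le 2\sum_{i\le2}\sqrt{p_i}\sum_{\alpha\ge3}\sqrt{p_\alpha}\,|X_{i\alpha}|\,|Y_{i\alpha}|\le 2\sqrt z\,\bigl(\sqrt{p_1}+\sqrt{p_2}\bigr)\le 2\sqrt{2sz},
\]
and the $\sqrt{p_1 z}$ scaling is actually attained. The paper reaches $2sz$ by Cauchy--Schwarz with $B_X=\sum_{i\le2,\alpha\ge3}p_ip_\alpha|X_{i\alpha}|^2$. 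But the summand carries the weight $\sqrt{p_ip_\alpha}$, so the valid split is $B_X=\sum\sqrt{p_ip_\alpha}\,|X_{i\alpha}|^2$. As written, the left side is homogeneous of degree one in $p$ and the right side of degree two. With the correct weights, the paper's shared row budget gives exactly your bound above. Re-normalizing the seed channels cannot help either: $R$ is by definition the sum over index pairs outside $\{1,2\}^2$, whatever parametrization of $A,B,a,b$ is used.

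A concrete counterexample is the Bell-to-Bell purification of Section~\ref{sec:upper}, taken at $f_0$ itself: $|\psi\rangle=\sum_i\sqrt{\lambda_i}|\Psi_i\rangle_{AB}|\Psi_i\rangle_{A'B'}$.

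\emph{Schmidt data.} Regrouped across $AA'|BB'$, the state is $\sum_k\mu_k|\Psi_k\rangle_{AA'}|\Psi_k\rangle_{BB'}$ with $\mu_1=(3a+c)/2$ and $\mu_2=\mu_3=-\mu_0=(a-c)/2$, where $a=\sqrt{199/600}$ and $c=\sqrt{1/200}$. Hence $p_1\approx0.809$ and $p_2=p_3=p_4\approx0.0638$. Label $x_1=\Psi_1$, $x_2=\Psi_3$, $x_3=\Psi_2$, $x_4=-\Psi_0$, and take the $y_\alpha$ likewise without the sign.

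\emph{The violation.} The operator $\sigma_3\otimes I$ swaps $\Psi_1\leftrightarrow\Psi_2$ and $\Psi_3\leftrightarrow\Psi_0$, so it maps the seed span into the tail. The top block of the $(3,3)$ entry therefore vanishes, and
\[
R_{33}=2\sqrt{p_1p_3}-2\sqrt{p_2p_4}=a^2-c^2=\tfrac{49}{150}=t,
\]
whereas $\varepsilon+2sz\approx0.239$.

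\emph{The conclusion fails too.} Since $\sigma_2\otimes I$ also maps the seed span into the tail, $A$ has rank one and $D$ is singular. So $\|D-tI_3\|_{\mathrm{op}}\ge t>\eta\approx0.258$, and the determinant sandwich \eqref{eq:detlow} in the proof of Proposition~\ref{prop:orientation} fails as well.

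\emph{Not a basis artifact.} Small generic phases on the isometry's columns, $V|i\rangle=e^{i\phi_i}|\Psi_i\rangle$, split the degeneracy while keeping Bell-state Schmidt vectors. Whichever of them becomes index~$2$, two Pauli directions still send the seed span into the tail, giving $R_{jj}=t$ exactly.

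What your method does prove is the weaker bound $\|R\|_{\mathrm{op}}\le\varepsilon+2\sqrt z\,(\sqrt{p_1}+\sqrt{p_2})$.
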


The three contributions are proved in two appendices. Appendix~A.3
proves the two remainder bounds: the cross terms between the leading
pair and the tail are bounded by $2sz$ through a shared row budget in
Bessel's inequality (the appearance of the \emph{largest tail
probability} $z$, rather than $\sqrt\varepsilon$, is what makes the
bound strong enough), and the tail--tail terms are an expectation of a
norm-one observable in the normalized tail state, hence at most
$\varepsilon$. Appendix~A.5 proves the marginal-cancellation bound:
the marginal conditions force the seed output Bloch vectors to cancel
against a tail of weight $\varepsilon$, giving
$\|a\|,\|b\| \le \varepsilon/s$ and hence
$\|sab^T\|_{\mathrm{op}} \le \varepsilon^2/s$.

\begin{proposition}[orientation polynomial]\label{prop:orientation}
Every physical ordered spectrum satisfies
\begin{equation}
G(p_1,p_2,p_3) \;:=\; 16\,p_1^2\,p_2^2\,s^2 \;-\; 27\,N^3 \;\ge\; 0,
\tag{1}\label{eq:G}
\end{equation}
where $N = s(t - \varepsilon - 2sz) - \varepsilon^2 = s(t-\eta)$.
\end{proposition}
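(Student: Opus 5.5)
We combine the approximate identity of Lemma~\ref{lem:remainder} with the determinant cap of Lemma~\ref{lem:orientation}. The idea is to bound $|\det D|$ from below using the closeness $D\approx tI_3$, and from above using the factorization $D = sAC_\delta B^T$. The sign of $\det C_\delta$ forces one of the two channels to be orientation-reversing, so the sharp constant $1/27$ applies to it.

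\emph{Trivial case.} First I would dispose of the case $t-\eta\le 0$. Since $s>0$, we have $N = s(t-\eta)\le 0$, so $-27N^3\ge 0$ and $G\ge 0$ holds trivially. Note also that $N = s(t-\varepsilon-2sz)-\varepsilon^2 = s(t-\eta)$ by the definition of $\eta$. From now on I assume $t-\eta>0$.

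\emph{Lower bound on $|\det D|$.} Write $D = tI_3 + E$ with $\|E\|_{\mathrm{op}}\le\eta<t$, using Lemma~\ref{lem:remainder}.
\begin{itemize}
\item By Weyl's inequality for singular values, $\sigma_{\min}(D)\ge t-\eta>0$, hence $|\det D|\ge (t-\eta)^3$.
\item Moreover $\det D>0$. Along the path $D_\lambda = tI_3+\lambda E$, $\lambda\in[0,1]$, every $D_\lambda$ satisfies $\sigma_{\min}(D_\lambda)\ge t-\lambda\eta>0$. So $\det D_\lambda$ never vanishes, and by continuity it keeps the sign of $\det(tI_3)=t^3>0$.
\end{itemize}

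\emph{Upper bound on $|\det D|$.} Factor
\[
\det D = s^3\det A\,\det B\,\det C_\delta ,
\qquad
\det C_\delta = -(1-\delta^2)^2 .
\]
\begin{itemize}
\item If $p_2=0$, then $\delta=1$ and $\det D=0$, which contradicts $\det D>0$. So in the nontrivial case $\det C_\delta<0$ strictly.
\item Positivity of $\det D$ then forces $\det A\,\det B<0$. Exactly one of $A,B$ has negative determinant; by Lemma~\ref{lem:orientation} its modulus is at most $1/27$, and the other's modulus is at most $1$.
\item Since $\delta=(p_1-p_2)/s$, we have $1-\delta^2 = 4p_1p_2/s^2$. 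Therefore
\[
(t-\eta)^3 \;\le\; |\det D| \;\le\; \frac{s^3}{27}\cdot\frac{16p_1^2p_2^2}{s^4} \;=\; \frac{16p_1^2p_2^2}{27\,s}.
\]
\end{itemize}

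\emph{Conclusion.} Multiplying by $27s^3>0$ gives $27N^3 = 27s^3(t-\eta)^3\le 16p_1^2p_2^2s^2$, which is \eqref{eq:G}.

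\emph{Main obstacle.} The substantive step is the sign argument. Its justification needs only the $\lambda$-homotopy above, but it is what licenses the factor $1/27$ instead of $1$. Everything else is bookkeeping, provided Lemma~\ref{lem:remainder} and the decomposition of Appendix~A.5 hold for every purification in $\mathrm{St}_{\mathbb C}(4,16)$. That includes degenerate spectra with $p_3=0$ or Schmidt rank $2$, where $R$ and $a,b$ may simply vanish.
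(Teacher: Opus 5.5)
Your proposal is correct and follows the paper's proof in Appendix~A.5 essentially step for step. You dispose of the trivial case $t\le\eta$ via $N\le 0$, use the segment from $tI_3$ to $D$ to get $\det D \ge (t-\eta)^3>0$, use the negative sign of $\det C_\delta$ to force exactly one orientation-reversing channel so the $1/27$ cap applies, and clear denominators with $1-\delta^2 = 4p_1p_2/s^2$. Your remark that $p_2=0$ would give $\det D=0$ only makes explicit what the paper leaves implicit when it says positivity of $\det D$ forces $\det A\det B<0$.
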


Mechanism: if $t > \eta$, Lemma~\ref{lem:remainder} keeps the segment
from $tI_3$ to $D$ away from singular matrices, so
$\det D \ge (t-\eta)^3 > 0$; but
$\det D = -s^3(1-\delta^2)^2\det A\det B$, so exactly one of the two
channels reverses orientation, and Lemma~\ref{lem:orientation} caps
$\det D \le s^3(1-\delta^2)^2/27$. Substituting
$1-\delta^2 = 4p_1p_2/s^2$ and clearing positive denominators gives
\eqref{eq:G}. If instead $t \le \eta$, then $N \le 0$ and \eqref{eq:G}
holds trivially --- so \eqref{eq:G} is necessary on the entire domain,
with no hidden case assumption (Appendix~A.5).

\subsection{The spectral square-root inequality}

The second inequality comes from a different physical mechanism:
symmetric extendibility. A state $\sigma_{XY}$ has a \emph{symmetric
extension} (in $Y$) if there is a state $\sigma_{XYY'}$ with $Y'$ a
copy of $Y$, invariant under swapping $Y$ and $Y'$, whose reduction to
$XY$ is $\sigma_{XY}$.

\begin{lemma}[symmetric-extension spectral lemma]\label{lem:spectral}
Let $\sigma_{XY}$ be a state with $Y$ a qubit and $\sigma_Y = I/2$
that admits a symmetric extension in $Y$. If
$q_1 \ge q_2 \ge q_3 \ge \cdots$ are its eigenvalues and
$Q_\sigma = \sum_{i\ge4} q_i$, then
\[
\sqrt{q_1} \;\le\; \sqrt{q_2} + \sqrt{q_3} + \sqrt{Q_\sigma}.
\]
\end{lemma}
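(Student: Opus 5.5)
The plan is to turn symmetric extendibility into a channel statement on a purification, and then extract the square-root inequality from a triangle inequality on amplitude vectors. Purify $\sigma_{XY}$ as
\[
|\psi\rangle_{XYE}=\sum_i\sqrt{q_i}\,|e_i\rangle_{XY}|i\rangle_E ,
\]
where the $|e_i\rangle$ are the eigenvectors of $\sigma_{XY}$.

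\textbf{Step 1: from extension to channel.} Take a symmetric extension $\sigma_{XYY'}$ and purify it to $|\Omega\rangle_{XYY'F}$. Its reduction to $XY$ is $\sigma_{XY}$, so by Uhlmann's theorem there is an isometry $E\to Y'F$ carrying $|\psi\rangle$ to $|\Omega\rangle$. Hence there is a channel $\Lambda:E\to Y'$ with
\[
(\mathrm{id}_{XY}\otimes\Lambda)(\psi_{XYE})=\sigma_{XYY'} .
\]
Swap invariance then says that $(\mathrm{id}_X\otimes\Lambda)(\psi_{XE})$, with $Y'$ relabelled as $Y$, equals $\sigma_{XY}$. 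In words, $\sigma_{XY}$ is anti-degradable: the environment can regenerate the $XY$ correlations.

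\textbf{Step 2: qubit block structure.} Since $Y$ is a qubit, write each eigenvector as $|e_i\rangle=|u_i^0\rangle|0\rangle+|u_i^1\rangle|1\rangle$ with $u_i^b\in X$. The condition $\sigma_Y=I/2$ becomes
\[
\sum_i q_i\langle u_i^b|u_i^{b'}\rangle=\tfrac12\delta_{bb'} .
\]
The leading vector $e_1$ carries weight $q_1$. The extension condition says $\Lambda$ maps the $E$-register, whose basis state $|i\rangle$ carries amplitude $\sqrt{q_i}$, to a qubit reproducing the same $X$-correlations. I will compare the off-diagonal block $\sum_i q_i\,|u_i^0\rangle\langle u_i^1|$ of $\sigma_{XY}$, in the $Y$ basis, with the version produced through $\Lambda$.

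In the Kraus representation $\Lambda(\cdot)=\sum_k K_k\cdot K_k^\dagger$ with $K_k:E\to\mathbb C^2$, the regenerated state is
\[
\sum_k\Bigl(\sum_{i,j}\sqrt{q_iq_j}\,(K_k)_{\cdot i}(K_k)_{\cdot j}^{*}\Bigr)\otimes(\text{$X$-marginal terms}).
\]
Its overlap with $|e_1\rangle$ must be $q_1$. On the other hand, a near-maximally entangled leading vector, which is forced when $q_1$ is large and $\sigma_Y=I/2$, can only be regenerated from $E$ if $\Lambda$ transfers coherence between $|1\rangle_E$ and the other $|i\rangle_E$.

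\textbf{Step 3: triangle inequality.} Bound $\sqrt{q_1}$, the amplitude of the component that must be matched, by a Cauchy--Schwarz/triangle estimate over the remaining amplitudes. The terms from $i=2,3$ contribute at most $\sqrt{q_2}$ and $\sqrt{q_3}$ each. The tail $i\ge4$ is grouped into a single normalized vector of norm $\sqrt{Q_\sigma}$, again by Cauchy--Schwarz, which gives
\[
\sqrt{q_1}\le\sqrt{q_2}+\sqrt{q_3}+\sqrt{Q_\sigma}.
\]
As a sanity check, a rank-one $\sigma$ with $\sigma_Y=I/2$ is maximally entangled. It has no symmetric extension, and it violates the inequality, as it should.

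\textbf{Main obstacle.} The hard part is Step 3: showing that matching $e_1$ needs amplitude from the other indices in this specific additive square-root form, and not merely a weaker fidelity bound such as $2q_1\le1+(\cdots)$. I would first try to write the matching condition as an exact vector identity in $\mathbb C^2\otimes E$. Concretely, the qubit-$Y$ constraint should force a $2\times2$ antisymmetric contraction, the $\epsilon_{bb'}$ pairing of $u^0$ and $u^1$ used for the singlet, to vanish. That vanishing expresses $\sqrt{q_1}$ times a unit-norm quantity as a combination of $\sqrt{q_i}$ times norm-at-most-one quantities. Applying the triangle inequality to that identity would then give the bound directly.
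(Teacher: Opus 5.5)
\textbf{There is a genuine gap: Step 3 is unproven, and as sketched it cannot work.} Step 1 is fine: Uhlmann turns a symmetric extension into a channel $\Lambda:E\to Y'$ that regenerates $\sigma_{XY}$ from $\psi_{XE}$. Step 2 is only bookkeeping. The entire content of the lemma sits in Step 3, which you assert rather than prove.

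First, you give no reason why $i=2,3$ enter individually while the indices $i\ge 4$ may be merged into a single $\sqrt{Q_\sigma}$. There are up to $2\dim X-3$ such indices, with $\dim X$ arbitrary. This grouping is the real content. The principle you invoke, that a group of indices contributes at most the square root of its total weight, cannot hold for arbitrary groupings. Applied to the single group $i\ge 2$ it would give $\sqrt{q_1}\le\sqrt{1-q_1}$, i.e.\ $q_1\le 1/2$. That is false for the two-qubit Werner state with singlet fraction $3/4$ (Bell-diagonal spectrum $(3/4,1/12,1/12,1/12)$), which does have a symmetric extension. The same state gives $\sqrt{3/4}=3\sqrt{1/12}$, so the lemma is tight, and any chain of norm estimates proving it must be lossless.

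The identity you hope for is not lossless. Take a bosonic extension; for qubit $Y$ you can always replace the singlet block $\tau_X\otimes\Psi^-_{YY'}$ of a swap-invariant extension by $\tau_X\otimes\tfrac12(|00\rangle\langle 00|+|11\rangle\langle 11|)$. Purify it as $\sum_i\sqrt{q_i}\,|e_i\rangle|v_i\rangle$ with $|v_i\rangle=\sum_b|b\rangle_{Y'}|w_i^b\rangle_F$. The vanishing singlet contraction then reads $\sum_i\sqrt{q_i}\,T_i=0$, where $T_i=u_i^0\otimes w_i^1-u_i^1\otimes w_i^0$ and $\|T_i\|\le 1$.

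This does not give the lemma. When $e_1$ is maximally entangled, $\|T_1\|=1/\sqrt2$. The tail satisfies only $\|\sum_{i\ge4}\sqrt{q_i}T_i\|\le\sqrt{2Q_\sigma}$. The triangle inequality therefore yields only $\sqrt{q_1}\le\sqrt2(\sqrt{q_2}+\sqrt{q_3})+2\sqrt{Q_\sigma}$, which is strictly weaker than the lemma.

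\textbf{The missing idea is a reduction to an exact two-qubit problem.} The paper first disposes of the trivial case $q_1\le 1/2$. It then compresses $X$ onto the at most two-dimensional Schmidt support $X_0$ of a leading eigenvector. The compressed operator $\omega$ has four properties:
\begin{itemize}
\item it still has a symmetric extension, since the projector acts only on $X$;
\item it keeps $q_1$ as an exact eigenvalue;
\item its other three eigenvalues satisfy $r_j\le q_j$ by Cauchy interlacing;
\item it satisfies $\omega_Y\preceq I/2$.
\end{itemize}
The paper then applies the necessary-and-sufficient two-qubit criterion $\mathrm{Tr}\,\omega_Y^2\ge\mathrm{Tr}\,\omega^2-4\sqrt{\det\omega}$ of Chen--Ji--Kribs--L\"utkenhaus--Zeng, which is homogeneous and hence valid unnormalized. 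It eliminates the marginal using $\lambda,\mu\le 1/2$. It then uses that the resulting function $\Phi_{q_1}$ is coordinatewise nondecreasing for $q_1>1/2$ to replace $(r_2,r_3,r_4)$ by $(q_2,q_3,Q_\sigma)$. This monotonicity step, together with $\dim(X_0\otimes Y)=4$, is exactly what produces three separate terms with the tail merged. With $q_1+q_2+q_3+Q_\sigma=1$ the expression factors as $(a{+}b{+}c{-}d)(a{+}b{-}c{+}d)(a{-}b{+}c{+}d)({-}a{+}b{+}c{+}d)\ge 0$. The first three factors are positive, which gives the lemma.

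To complete your antidegradability route, you would in effect have to re-prove the necessity half of that sharp two-qubit criterion, and your Step 3 does not attempt this.
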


The dimension of $X$ is arbitrary. The proof (Appendix~A.6)
compresses $X$ to the two-dimensional Schmidt support of a leading
eigenvector, which preserves symmetric extendibility, and applies the
two-qubit symmetric-extension criterion of Chen, Ji, Kribs,
L\"utkenhaus, and Zeng \cite{CJKLZ}: a two-qubit state $\omega$ has a
symmetric extension in $Y$ if and only if
$\mathrm{Tr}\,\omega_Y^2 \ge \mathrm{Tr}\,\omega^2 -
4\sqrt{\det\omega}$. After careful bookkeeping (eight distinct steps,
each spelled out in Appendix~A.6), the criterion factors into the
``Heron-like'' identity
\[
(a{+}b{+}c{-}d)(a{+}b{-}c{+}d)(a{-}b{+}c{+}d)({-}a{+}b{+}c{+}d)
\ \ge\ 0
\]
in the variables $a = \sqrt{q_1}$, $b = \sqrt{q_2}$,
$c = \sqrt{q_3}$, $d = \sqrt{Q_\sigma}$; the first three factors are
positive, so the fourth is nonnegative, which is the claim.

To apply Lemma~\ref{lem:spectral}, split the $8\times8$ coefficient
matrix $M$ of the purification across $AA'\,|\,BB'$ into its physical
triplet and singlet parts, $M = S + E$; the singlet weight of the
Werner state gives exactly $\|E\|_F^2 = f_0 = 1/200$. The normalized
triplet part is swap-invariant on the physical qubits, so its $A'A$
reduction has a symmetric extension and a maximally mixed qubit
marginal --- precisely the hypotheses of Lemma~\ref{lem:spectral}. Its
eigenvalues are the normalized squared singular values of $S$, and
Mirsky's singular-value perturbation inequality \cite{Mirsky}
transfers the conclusion from $S$ back to $M$ at the cost of a penalty
$2\sqrt{1/200} = 1/\sqrt{50}$:

\begin{proposition}[spectral inequality]\label{prop:spectral}
Every physical ordered spectrum satisfies
\begin{equation}
\sqrt{p_1} \;\le\; \sqrt{p_2} + \sqrt{p_3} + \sqrt{Q}
+ \frac{1}{\sqrt{50}} .
\tag{2}\label{eq:spectral}
\end{equation}
\end{proposition}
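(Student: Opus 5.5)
The plan is to follow the route sketched before the statement: split the coefficient matrix, apply Lemma~\ref{lem:spectral} to the normalized triplet part, and transfer the resulting square-root inequality back to $M$ via Mirsky's inequality.

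\emph{Decomposition.} Write $|\Phi_V\rangle=\sum_{i,j}M_{ij}|i\rangle_{AA'}|j\rangle_{BB'}$ with $M$ an $8\times8$ matrix. Decompose the physical pair $AB$ into its symmetric (triplet) and antisymmetric (singlet) subspaces, and set $|\Phi_V\rangle=(\Pi_{\rm sym}\otimes I_{A'B'})|\Phi_V\rangle+(\Psi_-\otimes I_{A'B'})|\Phi_V\rangle$. This defines coefficient matrices $S$ and $E$ with $M=S+E$. Since $\mathrm{Tr}_{A'B'}\Phi_V=W(f_0)$, we get $\|E\|_F^2=\mathrm{Tr}[\Psi_-W]=f_0$ and $\|S\|_F^2=1-f_0$. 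Let $|\phi\rangle$ be the normalized triplet part. It is invariant under the swap $A\leftrightarrow B$ on the physical qubits only. Its squared Schmidt coefficients across $AA'|BB'$ are $\sigma_i(S)^2/(1-f_0)$, and these are the eigenvalues of $\sigma_{A'A}=\mathrm{Tr}_{BB'}\phi$.

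\emph{Symmetric extension.} Take $X=A'$ and $Y=A$. The state $\phi_{A'AB}=\mathrm{Tr}_{B'}\phi$ is invariant under $A\leftrightarrow B$, so it is a symmetric extension of $\sigma_{A'A}$ with $Y'=B$. For the marginal, $\phi_{AB}=\Pi_{\rm sym}W\Pi_{\rm sym}/(1-f_0)=(I-\Psi_-)/3$, whose qubit marginal is $I/2$. Lemma~\ref{lem:spectral} therefore applies to the eigenvalues $q_i=\sigma_i(S)^2/(1-f_0)$. Multiplying by $\sqrt{1-f_0}$ gives
\[
\sigma_1(S)\le\sigma_2(S)+\sigma_3(S)+\Big(\sum_{i\ge4}\sigma_i(S)^2\Big)^{1/2}.
\]

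\emph{Transfer to $M$ (the main obstacle).} Define the functional
\[
F(x)=x_1-x_2-x_3-\Big(\sum_{i\ge4}x_i^2\Big)^{1/2}
\]
on vectors of ordered singular values. The proof needs $F(\sigma(M))\le F(\sigma(S))+2\|E\|_F$. The map $x\mapsto(x_1,x_2,x_3,\|x_{\ge4}\|)$ is $1$-Lipschitz from $\ell^2$ to $\ell^2$, and $(y_1,\dots,y_4)\mapsto y_1-y_2-y_3-y_4$ has $\ell^2$-Lipschitz constant $2$. Hence $F$ is $2$-Lipschitz in $\ell^2$. Mirsky's inequality $\|\sigma(M)-\sigma(S)\|_2\le\|M-S\|_F=\sqrt{f_0}$ then gives a penalty of $2\sqrt{1/200}=1/\sqrt{50}$. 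The subtle point is ordering. Mirsky compares the sorted singular-value vectors. $F$ is evaluated on sorted vectors on both sides, so the Lipschitz argument applies directly to the sorted vectors, and no rearrangement is needed.

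\emph{Conclusion.} The singular values of $M$ are $\sqrt{p_\alpha}$, with zeros padded, and $\sum_{i\ge4}\sigma_i(M)^2=Q$. Combining the two previous steps yields $\sqrt{p_1}\le\sqrt{p_2}+\sqrt{p_3}+\sqrt Q+1/\sqrt{50}$, which is \eqref{eq:spectral}.
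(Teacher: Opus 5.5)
Your proposal is correct and follows essentially the paper's own argument. It uses the same triplet/singlet split $M=S+E$ with $\|E\|_F^2=f_0$, applies Lemma~\ref{lem:spectral} to the swap-symmetric $A'AB$ reduction with $X=A'$ and $Y=A$, and transfers back to $M$ via Mirsky. Your $2$-Lipschitz bound for $F$, whose constant is the Euclidean norm of $(1,-1,-1,-1)$, is a compact restatement of the paper's explicit triangle-inequality and Cauchy--Schwarz chain $\sqrt3\,x+w\le 2\sqrt{x^2+w^2}$, and it gives the same penalty $1/\sqrt{50}$.
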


\subsection{Tail cap and entropy merge}

Two elementary facts complete the list. Since $r \le 8$ and the
spectrum is ordered, the tail $Q$ is a sum of at most five terms each
at most $p_3$:
\begin{equation}
0 \;\le\; Q \;\le\; 5\,p_3 .
\tag{3}\label{eq:tailcap}
\end{equation}
And since $h(x+y) \le h(x)+h(y)$ for $x,y \ge 0$ with $x+y\le1$
(concavity of $h$ with $h(0)=0$), merging the tail into a single mass
only decreases entropy:
\begin{equation}
H(p) \;\ge\; h(p_1) + h(p_2) + h(p_3) + h(Q) .
\tag{4}\label{eq:entropymerge}
\end{equation}
Finally, if $p_1 \le 1/2$, the ordered spectrum is majorized by
$(1/2, 1/2, 0, \ldots)$, and Schur concavity of entropy gives
$H(p) \ge 1 > 97/100$ directly. The certified subdivision therefore
only needs to handle $p_1 \in [1/2, 1]$.

\section{The certified lower bound}\label{sec:certificate}

\paragraph{The cover.}
Consider the closed box
\[
\mathcal B = [1/2,\,1]\times[0,\,1/2]\times[0,\,1/3]
\]
in the coordinates $(p_1,p_2,p_3)$, with $Q = 1-p_1-p_2-p_3$ treated
as a derived quantity. By Section~\ref{sec:inequalities}, the ordered
Schmidt spectrum of every purification with $p_1 \ge 1/2$ yields a
point of $\mathcal B$ (the ranges of $p_2$, $p_3$ follow from ordering
and normalization) satisfying $Q \ge 0$, $p_2 \ge p_3$, and
inequalities \eqref{eq:G}--\eqref{eq:tailcap}.

A deterministic binary subdivision refines $\mathcal B$: each box is
split at the exact rational midpoint of its widest coordinate, with
the first coordinate winning ties. Subdivision stops at a box as soon
as one of six \emph{strict} interval predicates certifies it. Writing
$[x_{\mathrm{lo}}, x_{\mathrm{hi}}]$ for the exact rational range of a
quantity $x$ over the box (computed by outward interval arithmetic:
every intermediate result is enclosed by rationals rounded away from
the true value), the predicates and their meanings are:

\begin{center}
\small
\begin{tabular}{@{}lrlp{0.26\textwidth}@{}}
\toprule
Class & Leaves & Certified condition & Meaning \\
\midrule
\texttt{qneg} & 420 & $Q_{\mathrm{hi}} < 0$ &
no point of the box has a valid tail weight $Q \ge 0$ \\
\texttt{order23} & 184 & $(p_2)_{\mathrm{hi}} < (p_3)_{\mathrm{lo}}$ &
no point respects the ordering $p_2 \ge p_3$ \\
\texttt{tailcap} & 107 & $Q_{\mathrm{lo}} > 5\,(p_3)_{\mathrm{hi}}$ &
every point violates \eqref{eq:tailcap} \\
\texttt{orientation} & 3810 & $G_{\mathrm{hi}} < 0$ &
every point violates \eqref{eq:G} \\
\texttt{spectral} & 10066 &
$\bigl(\sqrt{p_2}{+}\sqrt{p_3}{+}\sqrt{Q}{+}\tfrac{1}{\sqrt{50}}
\bigr)_{\mathrm{hi}} < (\sqrt{p_1})_{\mathrm{lo}}$ &
every point violates \eqref{eq:spectral} \\
\texttt{entropy} & 10796 &
$\bigl(h(p_1){+}h(p_2){+}h(p_3){+}h(Q)\bigr)_{\mathrm{lo}} >
\tfrac{97}{100}$ &
\eqref{eq:entropymerge} forces $H > 97/100$ on the whole box \\
\bottomrule
\end{tabular}
\end{center}

The six classes are not literally one predicate per inequality: the
first two are domain and ordering exclusions, the middle three certify
violations of \eqref{eq:tailcap}, \eqref{eq:G}, \eqref{eq:spectral}
respectively, and the last combines the entropy merge
\eqref{eq:entropymerge} with the target threshold. The subdivision
terminates with exactly $25{,}383$ leaves
($420 + 184 + 107 + 3810 + 10066 + 10796$), reached through $50{,}765$
tree nodes with maximum depth $38$.

\paragraph{Completeness.}
The leaf paths (strings of left/right choices from the root) are
prefix-free, every internal node has both children, and the Kraft sum
$\sum_{\text{leaves}} 2^{-\mathrm{depth}}$ is exactly $1$; together
these facts imply that the closed leaf boxes cover all of
$\mathcal B$. Points on shared boundaries may lie in two closed
leaves; this is harmless because every predicate holds at every point
of its closed box.

\paragraph{A worked leaf.}
Follow the path left--right--right from the root. The root's widest
coordinates are $p_1$ and $p_2$ (width $1/2$ each); $p_1$ wins the
tie, and the left child keeps $p_1 \in [1/2, 3/4]$. Now $p_2$ (width
$1/2$) is widest; the right child keeps $p_2 \in [1/4, 1/2]$. Now
$p_3$ (width $1/3$) is widest; the right child keeps
$p_3 \in [1/6, 1/3]$. The resulting box
\[
[1/2,\,3/4]\times[1/4,\,1/2]\times[1/6,\,1/3]
\]
is a depth-3 \texttt{entropy} leaf. Indeed $Q \in [-7/12,\,1/12]$,
clipped to $[0, 1/12]$ for the entropy term; since $h$ is concave, its
minimum over an interval is at an endpoint, so the four contributions
are at least
\[
\min\{h(\tfrac12), h(\tfrac34)\} + \min\{h(\tfrac14), h(\tfrac12)\}
+ \min\{h(\tfrac16), h(\tfrac13)\} + \min\{h(0), h(\tfrac1{12})\},
\]
\begingroup\sloppy
which the certificate bounds below by certified rational enclosures;
numerically the four minima are $0.3112\ldots$, $0.5$ (exactly, both
endpoints), $0.4308\ldots$, and $0$, summing to
$1.242\ldots > 97/100$. For contrast, the sibling path
right--right--right produces the box
$[3/4,1]\times[1/4,1/2]\times[1/6,1/3]$, a depth-3 \texttt{qneg}
leaf: there
$Q_{\mathrm{hi}} = 1 - \tfrac34 - \tfrac14 - \tfrac16 = -\tfrac16 <
0$, so no spectrum lands in it at all.\par\endgroup

\paragraph{Conclusion of the lower bound.}
Let $p$ be the (zero-padded) ordered Schmidt spectrum of any
purification in the domain $\mathrm{St}_{\mathbb C}(4,16)$ of
Section~\ref{sec:reduction}. If $p_1 \le 1/2$ then $H(p) \ge 1$. If
$p_1 \ge 1/2$, the point $(p_1,p_2,p_3)$ lies in some closed leaf of
the cover; it cannot lie in a leaf of the first five classes, because
Section~\ref{sec:inequalities} shows physical spectra satisfy
$Q\ge0$, the ordering, and \eqref{eq:G}--\eqref{eq:tailcap}; hence it
lies in an \texttt{entropy} leaf, and \eqref{eq:entropymerge} gives
$H(p) > 97/100$. The minimum of $H(p)$ over this compact domain is
attained, and by the Appendix~A.1 reduction that minimum equals
$E_P(W(f_0))$ --- the minimum over \emph{all} purifications with
arbitrary ancillas; hence
\begin{equation}
E_P\!\left(W(1/200)\right) \;>\; \frac{97}{100}.
\tag{5}\label{eq:lower}
\end{equation}

Appendix~B specifies the exact certificate semantics --- outward
rational intervals, dyadic square-root enclosures checked by integer
squaring, directed logarithm series with explicit remainder bounds ---
and the verification program that reconstructs the entire tree from
the split rule alone and rechecks every leaf at higher precision.

\section{The regularized upper bound}\label{sec:upper}

The upper bound is a chain of three exact ingredients and one
convexity theorem.

\paragraph{The endpoint $f = 0$.}
$W(0) = (I-\Psi_-)/3 = P_{\mathrm{sym}}/3$ is the normalized projector
onto the symmetric (triplet) subspace. For states supported on the
symmetric subspace, purifications can be chosen invariant under
swapping the two physical qubits, and a weak monotonicity argument
(Appendix~A.9; the general statement is Proposition~7 of Christandl
and Winter \cite{CW}) pins the value exactly:
\[
E_P\!\left(W(0)^{\otimes n}\right) = n
\quad\text{for every } n,
\qquad\text{hence}\qquad
E_P^\infty(W(0)) = 1 .
\]

\paragraph{The endpoint $f = 1/100$.}
$W(1/100)$ has Bell-basis spectrum
$(1/100,\allowbreak\,33/100,\allowbreak\,33/100,\allowbreak\,
33/100)$. Fix the Bell convention
\begin{gather*}
|\Psi_0\rangle = \tfrac{1}{\sqrt2}(|01\rangle - |10\rangle),\qquad
|\Psi_1\rangle = \tfrac{1}{\sqrt2}(|00\rangle + |11\rangle),\\
|\Psi_2\rangle = \tfrac{1}{\sqrt2}(|00\rangle - |11\rangle),\qquad
|\Psi_3\rangle = \tfrac{1}{\sqrt2}(|01\rangle + |10\rangle),
\end{gather*}
and let $\lambda_0 = 1/100$,
$\lambda_1=\lambda_2=\lambda_3 = 33/100$. Starting from the standard
purification
$\sum_i \sqrt{\lambda_i}\,|\Psi_i\rangle_{AB}|i\rangle_{A'}$, apply
the isometry $V: \mathbb{C}^4 \to A'B'$ defined by
$V|i\rangle = |\Psi_i\rangle_{A'B'}$ --- each purifying label goes to
the corresponding ancillary Bell state. Its columns are orthonormal,
so this is a feasible purification:
\[
|\psi\rangle \;=\; \sum_{i=0}^{3}\sqrt{\lambda_i}\,
|\Psi_i\rangle_{AB}\,|\Psi_i\rangle_{A'B'} .
\]
The ancillary Bell states are two-qubit vectors; in the
$\mathbb{C}^4\otimes\mathbb{C}^4$ ancilla framework of
Section~\ref{sec:reduction} they occupy a
$\mathbb{C}^2\otimes\mathbb{C}^2$ subspace (the first two dimensions
of each ancilla factor), so the $8\times 8$ marginal on $AA'$ has the
block form $\rho_{4\times4} \oplus 0_4$: a rank-four $4\times4$ block
plus four exact zero eigenvalues. Its nonzero spectrum is
\[
\{\,s,\ s,\ s,\ 1-3s\,\},
\qquad
s = \frac{17-\sqrt{33}}{200},
\]
verified by exact power sums: the verification program confirms
$\mathrm{Tr}\,\rho_{AA'}^k = 3s^k + (1-3s)^k$ for $k = 1,2,3,4$ in
exact arithmetic over $\mathbb{Q}[\sqrt{33}]$, and four exact power
sums determine a multiset of four nonzero eigenvalues (Newton's
identities). This spectrum check is one of the computer-assisted steps
(Appendix~B). Therefore
\[
E_P\!\left(W(1/100)\right) \;\le\; U_w :=
-3s\log_2 s - (1-3s)\log_2(1-3s),
\]
and the first directed rational certificate proves
\begin{equation}
\begin{split}
U_w \;\le\; U_+ &=
\frac{922616583105498918071628209679061634159251672167426749996626}
{10^{60}}\\
&= 0.92261\ldots
\end{split}
\tag{6}\label{eq:Uplus}
\end{equation}
By subadditivity and Fekete's lemma (Section~1),
$E_P^\infty(W(1/100)) \le E_P(W(1/100)) \le U_+$.

\paragraph{The convexity theorem.}
The bridge between the endpoints is the finite-ensemble inequality
first stated by Chen and Winter (Theorem~3 and Corollary~4 of
\cite{ChenWinter}): for every finite ensemble
$\rho = \sum_i p_i\rho_i$,
\begin{equation}
E_P^\infty(\rho) \;\le\; \sum_i p_i\,E_P^\infty(\rho_i)
\;+\; S(\rho) - \sum_i p_i S(\rho_i),
\tag{7}\label{eq:convexity}
\end{equation}
equivalently, $E_P^\infty - S$ is convex on ensembles. Because
\cite{ChenWinter} is an unpublished preprint whose proof of this
theorem is a summary outline, Appendix~A$'$ gives a complete proof of
\eqref{eq:convexity} from published ingredients --- the operational
identity of \cite{THLD}, entanglement dilution with sublinear
communication \cite{LoPopescu}, standard typicality estimates
\cite{WinterCMP, WinterIT}, and the operator-sampling bound of
Ahlswede and Winter \cite{AW}, in the covering form used by Winter
\cite{WinterCMP} and by Groisman, Popescu, and Winter \cite{GPW}.
This paper therefore does not rest on any unpublished proof.

\paragraph{Closing the chain.}
The Werner family is affine in $f$, so exactly
\begin{equation}
W\!\left(\tfrac{1}{200}\right) \;=\;
\tfrac12\,W(0) \;+\; \tfrac12\,W\!\left(\tfrac{1}{100}\right).
\tag{8}\label{eq:mixture}
\end{equation}
Applying \eqref{eq:convexity} to this two-state ensemble and
inserting the endpoint values,
\begin{equation}
E_P^\infty\!\left(W(\tfrac{1}{200})\right)
\;\le\;
S\!\left(W(\tfrac{1}{200})\right)
+ \tfrac12\!\left(1 - S(W(0))\right)
+ \tfrac12\!\left(U_+ - S\!\left(W(\tfrac{1}{100})\right)\right),
\tag{9}\label{eq:chain}
\end{equation}
with the three exact entropies
\begin{gather*}
S(W(0)) = \log_2 3,
\qquad
S\!\left(W(\tfrac{1}{100})\right)
= -\tfrac{1}{100}\log_2\tfrac{1}{100}
  -\tfrac{99}{100}\log_2\tfrac{33}{100},\\
S\!\left(W(\tfrac{1}{200})\right)
= -\tfrac{1}{200}\log_2\tfrac{1}{200}
  -\tfrac{199}{200}\log_2\tfrac{199}{600}.
\end{gather*}
The second directed rational certificate bounds the right-hand side of
\eqref{eq:chain} strictly below $T_+$, using the same outward
logarithm semantics as the lower-bound certificate (Appendix~B).
Therefore
\begin{equation}
E_P^\infty\!\left(W(1/200)\right) \;\le\; T_+ .
\tag{10}\label{eq:upper}
\end{equation}
Combining \eqref{eq:lower}, \eqref{eq:upper}, and the exact positive
margin $97/100 - T_+ > 0$ displayed in Theorem~\ref{thm:main}
completes the proof: $E_P^\infty(W(1/200)) < E_P(W(1/200))$, so some
finite tensor power beats $n$ times the single-copy value.
\hfill$\blacksquare$

\section{Conclusion, limitations, and context}\label{sec:conclusion}

The entanglement of purification is not additive. The counterexample
is exactly the one Chen and Winter proposed: the two-qubit Werner
state at $f = 1/200$. The following table compares their numerical
program with the certified quantities proved here; every number in the
middle column is a numerical (local-search or numerically derived)
value from \cite{ChenWinter} and is labeled as such.

\begin{center}
\footnotesize
\begin{tabular}{@{}lll@{}}
\toprule
Quantity & Chen--Winter \cite{ChenWinter} (numerical) &
This paper (proved) \\
\midrule
$E_P(W(0))$ & $= 1$ & $= 1$ (Appendix~A.9) \\
$E_P(W(1/100))$ upper bound & $\le 0.9226$ (local search) &
$\le U_+ = 0.92261\ldots$ (exact rational) \\
$E_P^\infty(W(1/200))$ upper bound & $\le 0.9663$ (via their Thm.~3) &
$\le T_+ = 0.96632\ldots$ (exact rational) \\
$E_P(W(1/200))$ lower & $\gtrsim 0.99$ suggested, unproved &
$> 97/100$ (certified cover) \\
\bottomrule
\end{tabular}
\end{center}

The agreement in the second row to four decimal places suggests that
their local search had already found (a state equivalent to) the
explicit Bell-to-Bell purification of Section~\ref{sec:upper}; the
present contribution is that both rows are now theorems rather than
numerics, and that the last row --- the step their paper explicitly
left open --- is closed by a finite certificate.

\begin{quote}
\textbf{Scope.} The proof establishes strict nonadditivity at some
finite tensor power $n$. It yields no exact value of $E_P$ or
$E_P^\infty$ at the nonadditivity parameter $f = 1/200$; at the
endpoint $f = 0$ it does prove the exact values
$E_P(W(0)^{\otimes n}) = n$ and $E_P^\infty(W(0)) = 1$, with the
trivial-ancilla purification optimal (Appendix~A.9). The explicit
purification at $f = 1/100$ is a feasible witness, not shown optimal,
and no optimizer at $f = 1/100$ or $f = 1/200$ is identified. No
minimal or explicit violating $n$ is produced. No classification of
orientation-reversing channels is claimed beyond the necessary
determinant inequality \eqref{eq:G}. No separate partial search for
the exact minimum is used in this proof.
\end{quote}

The choice of threshold $97/100$ has slack on both sides: the
certified cover proves $E_P > 0.97$ while the numerics of
\cite{ChenWinter} suggest the truth is near $0.99$, and the upper
chain gives $T_+ \approx 0.9663$. Any threshold strictly between $T_+$
and the true single-copy value would do; $97/100$ keeps the
certificate small.

Two contrasts situate the result. First, for the R\'enyi-2 variant of
the entanglement of purification, additivity \emph{holds} on relevant
classes \cite{FarajiKhanian} --- the nonadditivity proved here is a
genuinely von Neumann phenomenon on this family, and the contrast
sharpens the question of which entropies support additive purification
measures. Second, in holography the entanglement of purification is
conjecturally identified with the entanglement wedge cross-section
\cite{TU}; that identification concerns a different (large-$N$,
geometric) regime, and nothing here bears on it beyond removing the
possibility that additivity holds as an abstract theorem.

Beyond the specific open problem it closes, the paper illustrates a
transferable proof pattern: replace an intractable optimization over a
continuous manifold by \emph{necessary} scalar inequalities on a
low-dimensional summary of the optimizer (here, four Schmidt-spectrum
inequalities), then certify the resulting finite-dimensional claim by
an exact-rational subdivision whose every accepting comparison is an
integer comparison. The pattern needs no floating-point trust anywhere
and produces certificates a short independent program can check.

\section*{Acknowledgments}

The author thanks the authors of \cite{ChenWinter} for posing the
problem in concrete, checkable form.

\section*{Contribution statement}

The sole author takes full responsibility for the entire content.
Scope of AI use in producing this work: AI systems were used for code
production, for calculations, for the mechanical derivation of proof
steps, and for drafting text under the author's direction. The
verification chain is independent of how the derivations were
produced: every computational claim reduces to big-integer comparisons
checkable by the supplied dependency-free verifier, and every imported
mathematical statement is either proved in the appendices or cited to
the published literature.

\section*{Data and code availability}

The supplementary artifact contains two dependency-free Python
programs (standard library only), a manifest of SHA-256 hashes, and
run instructions. \texttt{verify\_leaves.py} reconstructs the entire
25{,}383-leaf subdivision from the split rule alone and rechecks every
leaf predicate at increased precision;
\texttt{verify\_upper\_chain.py} checks the endpoint isometry, the
exact spectrum power sums, both directed rational logarithm
certificates, the exact mixture identity, and the final margin. Both
print \texttt{PASS} and exit successfully only if every check
succeeds; both reject deliberately corrupted test variants of the
certificate. The programs verify the finite calculations; the paper
proves why those calculations apply to every purification.
The artifact is archived at
DOI~\href{https://doi.org/10.5281/zenodo.22097511}{10.5281/zenodo.22097511}.

\appendix

\section{Complete analytic proofs}

Throughout, $|\Phi_V\rangle$ is the purification determined by
$V \in \mathrm{St}_{\mathbb C}(4,16)$ as in
Section~\ref{sec:reduction}, with ordered Schmidt data
$(p_\alpha, |x_\alpha\rangle, |y_\alpha\rangle)$, $r \le 8$, and
$s = p_1+p_2$, $\varepsilon = 1-s$, $z = p_3$,
$Q = \sum_{\alpha\ge4}p_\alpha$.

\subsection{Ancilla reduction and symbol map}

The definition of $E_P$ minimizes $S(AA')$ over purifications with
arbitrary finite ancillas. Corollary~3 of Ibinson--Linden--Winter
\cite{ILW} states that for any bipartite state $\rho_{AB}$ there is an
optimal purification in which each ancillary factor has dimension at
most $\mathrm{rank}\,\rho$; for $W(f_0)$, $\mathrm{rank} = 4$. Fix the
standard purification
\[
|\varphi\rangle = \sum_{i=0}^{3}\sqrt{\lambda_i}\,
|\Psi_i\rangle_{AB}\,|i\rangle_{A'},
\]
with $\lambda_i$ the Bell-basis eigenvalues of $W(f_0)$ and the Bell
convention of Section~\ref{sec:upper}. Every purification with
ancillas $A'B' \cong \mathbb{C}^4\otimes\mathbb{C}^4$ is
$(I_{AB}\otimes V)|\varphi\rangle$ for exactly one isometry
$V:\mathbb{C}^4 \to \mathbb{C}^4\otimes\mathbb{C}^4$ (this is the
standard fact that all purifications of a state differ by an isometry
on the purifying system). A purification with smaller ancillas embeds
into this family by padding, and padding does not change $S(AA')$.
Hence
\[
E_P(W(f_0)) = \min_{V\in\mathrm{St}_{\mathbb C}(4,16)}
S(AA')_{\Phi_V},
\]
and the minimum is attained because the Stiefel manifold is compact
and $V \mapsto S(AA')$ is continuous. Dimension $4$ is sufficient by
\cite{ILW}; no claim of necessity is made.

\subsection{Correlation equations for Schmidt data}

Define isometries $U_A|\alpha\rangle = |x_\alpha\rangle$ and
$U_B|\alpha\rangle = |y_\alpha\rangle$ from an $r$-dimensional index
space into $AA'$ and $BB'$. For unit vectors $u, v \in \mathbb{R}^3$
set
\[
X(u) = U_A^\dagger\left[(u\cdot\sigma)_A\otimes I_{A'}\right]U_A,
\qquad
Y(v) = U_B^\dagger\left[(v\cdot\sigma)_B\otimes I_{B'}\right]U_B .
\]
These are Hermitian, and being compressions of unitaries they are
contractions with $X(u)^2, Y(v)^2 \preceq I$. Let
$P = \mathrm{diag}(p_\alpha)$. Expanding
$\langle\Phi_V|(u\cdot\sigma)\otimes I|\Phi_V\rangle$ and
$\langle\Phi_V|(u\cdot\sigma)\otimes(v\cdot\sigma)|\Phi_V\rangle$ in
the Schmidt basis, and using that the Werner marginals vanish and the
correlation matrix is $tI_3$:
\begin{equation}
\mathrm{Tr}\left(P\,X(u)\right) = \mathrm{Tr}\left(P\,Y(v)\right) = 0,
\qquad
\sum_{\alpha,\beta}\sqrt{p_\alpha p_\beta}\;
X(u)_{\alpha\beta}\,Y(v)_{\alpha\beta}
= t\,(u\cdot v).
\tag{A.1}\label{eq:corr}
\end{equation}
The second identity follows by direct expansion:
\begin{multline*}
\langle\Phi_V|(u\cdot\sigma\otimes I)\otimes(v\cdot\sigma\otimes I)
|\Phi_V\rangle
= \sum_{\alpha,\beta}\sqrt{p_\alpha p_\beta}\,
\langle x_\beta|(u\cdot\sigma\otimes I)|x_\alpha\rangle\,
\langle y_\beta|(v\cdot\sigma\otimes I)|y_\alpha\rangle\\
= \sum_{\alpha,\beta}\sqrt{p_\alpha p_\beta}\,
X(u)_{\beta\alpha}\,Y(v)_{\beta\alpha},
\end{multline*}
and relabeling $(\alpha,\beta)$ gives the symmetric form displayed in
\eqref{eq:corr}; note that \emph{both} matrix elements appear
unconjugated, which is why the transpose (rather than the adjoint) of
one factor appears in the trace form used in Appendix~A.5. No reality
or rank restriction on the Schmidt vectors is used.

\subsection{The row-Bessel remainder estimate}

Let $R(u,v)$ be the part of the left side of \eqref{eq:corr} in which
the index pair $(\alpha,\beta)$ is \emph{not} contained in
$\{1,2\}^2$. We prove
\begin{equation}
|R(u,v)| \;\le\; \varepsilon + 2sz
\qquad\text{for all unit } u, v,
\tag{A.2}\label{eq:remainder}
\end{equation}
which is the operator-norm bound of Lemma~\ref{lem:remainder} for the
matrix $R_{jk} = R(e_j, e_k)$ (the bilinear form determines the
matrix, and the bound is uniform over unit vectors).

Fix unit $u,v$ and abbreviate $X = X(u)$, $Y = Y(v)$. Split $R$ into
the \emph{cross} part (one index in $\{1,2\}$, the other $\ge 3$) and
the \emph{tail} part (both indices $\ge 3$).

\emph{Cross part.} Hermiticity of $X$ and $Y$ pairs the top--tail and
tail--top terms:
\[
R_{\mathrm{cross}} = 2\,\mathrm{Re}\!\!\sum_{i\le2,\ \alpha\ge3}
\sqrt{p_i p_\alpha}\,X_{i\alpha}Y_{i\alpha} .
\]
By the Cauchy--Schwarz inequality,
$|R_{\mathrm{cross}}| \le 2\sqrt{B_X B_Y}$ with
$B_X = \sum_{i\le2,\alpha\ge3} p_i p_\alpha |X_{i\alpha}|^2$. The key
step is a \emph{shared row budget}: ordering gives $p_\alpha \le z$
for every $\alpha \ge 3$, and Bessel's inequality for the rows of the
contraction $X$ gives
\[
\sum_{\alpha\ge3}|X_{i\alpha}|^2
\;\le\; \sum_{\beta}|X_{i\beta}|^2
\;=\; (X^2)_{ii} \;\le\; 1 .
\]
Hence $B_X \le z(p_1+p_2) = zs$, and identically $B_Y \le zs$, so
$|R_{\mathrm{cross}}| \le 2zs$. Note that the bound scales with the
\emph{largest tail probability} $z$, not with $\sqrt{\varepsilon}$:
this is what makes the final inequality \eqref{eq:G} strong enough for
the certificate.

\emph{Tail part.} If $\varepsilon > 0$, let
$|\Phi_{\mathrm{tail}}\rangle = \varepsilon^{-1/2}
\sum_{\alpha\ge3}\sqrt{p_\alpha}\,|x_\alpha\rangle|y_\alpha\rangle$
be the normalized tail of the Schmidt sum. Then the tail--tail part of
the sum equals exactly
$\varepsilon\,\langle\Phi_{\mathrm{tail}}|
(u\cdot\sigma\otimes I)\otimes(v\cdot\sigma\otimes I)
|\Phi_{\mathrm{tail}}\rangle$, an expectation of an observable of
operator norm one; its absolute value is therefore at most
$\varepsilon$. If $\varepsilon = 0$ the tail part vanishes. Adding the
two parts proves \eqref{eq:remainder}.

\subsection{The orientation determinant cap}

\noindent\textbf{Lemma 1.} \emph{If $r \mapsto c + Lr$ is the affine
Bloch form of a qubit channel, then $|\det L| \le 1$; and
$\det L < 0$ implies $|\det L| \le 1/27$.}

\medskip

\emph{Proof.} Every qubit channel contracts the trace distance, and on
Bloch vectors the trace distance is the Euclidean distance, so
$\|L\|_{\mathrm{op}} \le 1$ and $|\det L| \le 1$.

Suppose $\det L < 0$; then $L$ is invertible. Polar-decompose
$L = OP$ with $P > 0$ symmetric and $O$ orthogonal; $\det L < 0$
makes $O$ improper ($\det O = -1$). For every proper rotation
$R \in SO(3)$, consider the new channel obtained by preceding the
given channel with the unitary rotation whose Bloch action is $R^T$
and following it with the unitary rotation whose Bloch action is
$ORO^T$. The conjugated rotation $ORO^T$ is proper
($\det = (\det O)^2\det R = 1$), so both pre- and post-processing are
legitimate unitary channels. The composite has affine part
$r \mapsto (ORO^T)c + (ORO^T)(OP)R^T r$. Averaging over the Haar
measure of $SO(3)$ (a convex mixture of channels, hence a channel):
the translation averages to
$\int (ORO^T)c\,dR = O\int R\,(O^Tc)\,dR = 0$ because
$\int R v\,dR = 0$ for every fixed vector $v$, and the linear part
becomes
\[
\int (ORO^T)(OP)R^T\,dR
\;=\; O\int R P R^T\,dR
\;=\; \alpha\,O,
\qquad
\alpha = \frac{\mathrm{Tr}\,P}{3},
\]
using the standard twirl identity
$\int RPR^T dR = (\mathrm{Tr}P/3)I$. A final proper rotation on each
side brings $O$ to $\mathrm{diag}(1,1,-1)$, so we obtain a unital
qubit channel with diagonal Bloch matrix
$\mathrm{diag}(\alpha,\alpha,-\alpha)$. A unital qubit channel with
diagonal Bloch multipliers $(\lambda_1,\lambda_2,\lambda_3)$ is a
Pauli channel with probabilities
$q_0 = (1+\lambda_1+\lambda_2+\lambda_3)/4$ and cyclically
$q_j = (1+\lambda_j-\lambda_k-\lambda_l)/4$; equivalently, its
normalized Choi state $(\mathrm{id}\otimes\Lambda)(\Phi^+)$ is
diagonal in the Bell basis with eigenvalues exactly the $q_\mu$.
Complete positivity is exactly $q_\mu \ge 0$ for all four. For
$(\alpha,\alpha,-\alpha)$ the four values $q_\mu$ are
\[
\frac{1+\alpha}{4},\quad \frac{1+\alpha}{4},\quad
\frac{1+\alpha}{4},\quad \frac{1-3\alpha}{4},
\]
so complete positivity forces $1 - 3\alpha \ge 0$, i.e.\
$\alpha \le 1/3$. Finally, by the arithmetic--geometric mean
inequality applied to the three positive eigenvalues of $P$,
\[
|\det L| = \det P \le
\left(\frac{\mathrm{Tr}\,P}{3}\right)^{3} = \alpha^3 \le \frac{1}{27}.
\]
\hfill$\blacksquare$

\subsection{The channel construction and the orientation polynomial}

\emph{The seed channels.} Restrict $U_A$ to the span of the first two
Schmidt indices; this gives an isometry
$W_A : \mathbb{C}^2 \to A\otimes A'$, and
\[
\Lambda_A(\omega) \;=\; \mathrm{Tr}_{A'}\!\left(W_A\,\omega\,
W_A^\dagger\right)
\]
is a qubit channel from the two-dimensional \emph{seed} space (spanned
by Schmidt labels $1,2$) to the physical qubit $A$. Let its Bloch form
be $r \mapsto a_0 + Ar$. Its adjoint compresses observables: for the
$2\times2$ restriction $\bar X(u)$ of $X(u)$ to the seed space,
\begin{equation}
\bar X(u) \;=\; \Lambda_A^\dagger(u\cdot\sigma)
\;=\; (u\cdot a_0)\,I_2 + (A^T u)\cdot\sigma .
\tag{A.3}\label{eq:adjoint}
\end{equation}
Identically on the $B$ side with Bloch form $r \mapsto b_0 + Br$ and
$\bar Y(v) = (v\cdot b_0)I_2 + (B^T v)\cdot\sigma$.

\emph{The seed state.} The normalized leading-two weight matrix is
$\tau = \mathrm{diag}(p_1, p_2)/s$, a qubit state with Bloch vector
$\delta e_3$ in the seed basis, $\delta = (p_1-p_2)/s$, and
$\tau^{1/2} = \mathrm{diag}(u_+, u_-)$ with
$u_\pm = \sqrt{(1\pm\delta)/2}$.

\emph{The top block.} The part of \eqref{eq:corr} with both indices in
$\{1,2\}$ is
\[
\sum_{i,j\le2}\sqrt{p_ip_j}\,\bar X_{ij}\bar Y_{ij}
= s\,\mathrm{Tr}\!\left(\tau^{1/2}\bar X\,\tau^{1/2}\bar Y^{T}\right).
\]
Write $\bar X = x_0 I + x\cdot\sigma$ and
$\bar Y = y_0 I + y\cdot\sigma$ per \eqref{eq:adjoint}, so
$\bar Y^T = y_0 I + y_1\sigma_1 - y_2\sigma_2 + y_3\sigma_3$
(transposition in the seed basis fixes $\sigma_1,\sigma_3$ and negates
$\sigma_2$). A direct $2\times2$ computation with
$D_\tau = \mathrm{diag}(u_+,u_-)$ gives
\[
\mathrm{Tr}\!\left(D_\tau \bar X D_\tau \bar Y^{T}\right)
= x_0y_0 + \delta(x_0y_3 + x_3y_0) + x_3y_3
+ \sqrt{1-\delta^2}\,(x_1y_1 - x_2y_2).
\]
Completing the product,
\[
= (x_0+\delta x_3)(y_0+\delta y_3)
+ \sqrt{1-\delta^2}\,x_1y_1
- \sqrt{1-\delta^2}\,x_2y_2
+ (1-\delta^2)\,x_3y_3 .
\]
Substituting $x_0 = u\cdot a_0$, $x = A^Tu$, and likewise for $y$:
$x_0 + \delta x_3 = u\cdot(a_0 + \delta Ae_3) = u\cdot a$, where
$a := a_0 + A(\delta e_3)$ is exactly the output Bloch vector
$\Lambda_A(\tau)$ of the seed state; similarly
$b := b_0 + B(\delta e_3)$. The remaining three terms assemble into
$u^T A\,C_\delta\,B^T v$ with
\[
C_\delta = \mathrm{diag}\!\left(\sqrt{1-\delta^2},\,
-\sqrt{1-\delta^2},\, 1-\delta^2\right),
\qquad
\det C_\delta = -(1-\delta^2)^2 .
\]
Hence, as bilinear forms in $(u,v)$, the full identity
\eqref{eq:corr} becomes
\begin{equation}
t\,I_3 \;=\; s\,A C_\delta B^T \;+\; s\,ab^T \;+\; R ,
\tag{A.4}\label{eq:decomp}
\end{equation}
with $R$ bounded by \eqref{eq:remainder}.

\emph{The marginal cancellation.} The full one-body marginals vanish:
$0 = \mathrm{Tr}(P X(u)) = s\,(u\cdot a) + (\text{tail
contribution})$, where the tail contribution is $\varepsilon$ times
the expectation of $u\cdot\sigma$ in the normalized tail state, of
absolute value at most $1$. Hence $|u\cdot a| \le \varepsilon/s$ for
every unit $u$, i.e.\ $\|a\| \le \varepsilon/s$; identically
$\|b\| \le \varepsilon/s$, so
$\|s\,ab^T\|_{\mathrm{op}} \le \varepsilon^2/s$. With
\eqref{eq:remainder} this proves Lemma~\ref{lem:remainder}:
\[
\|D - tI_3\|_{\mathrm{op}} \le
\eta = \varepsilon + 2sz + \frac{\varepsilon^2}{s},
\qquad D = sAC_\delta B^T .
\]

\emph{The determinant sandwich.} Suppose $t > \eta$. Every matrix on
the segment from $tI_3$ to $D$ is within $\eta < t$ of $tI_3$ in
operator norm, hence invertible; determinants are continuous and
nonzero along the segment, so $\det D > 0$, and moreover every
singular value of $D$ is at least $t - \eta$, giving
\begin{equation}
\det D \;\ge\; (t-\eta)^3 \;>\; 0 .
\tag{A.5}\label{eq:detlow}
\end{equation}
On the other hand $\det D = s^3\,\det A\,\det C_\delta\,\det B =
-s^3(1-\delta^2)^2\det A\det B$. Positivity of $\det D$ forces
$\det A\det B < 0$: exactly one of the two channels is
orientation-reversing. Apply Lemma~\ref{lem:orientation}'s cap $1/27$
to that channel and the trivial cap $1$ to the other:
\begin{equation}
\det D \;\le\; \frac{s^3(1-\delta^2)^2}{27} .
\tag{A.6}\label{eq:dethigh}
\end{equation}

\emph{Clearing denominators.} $1-\delta^2 = 4p_1p_2/s^2$, so
\eqref{eq:detlow}--\eqref{eq:dethigh} give
$(t-\eta)^3 \le 16p_1^2p_2^2/(27 s)$; multiplying by $27 s^3 > 0$ and
writing $N = s(t-\eta) = s(t-\varepsilon-2sz)-\varepsilon^2$:
\[
G = 16\,p_1^2p_2^2s^2 - 27N^3 \;\ge\; 0 .
\]
If instead $t \le \eta$, then $N \le 0$, so $-27N^3 \ge 0$ and
$G \ge 0$ automatically. Inequality \eqref{eq:G} therefore holds for
every physical spectrum, with both cases covered explicitly.
\hfill$\blacksquare$

\subsection{The symmetric-extension spectral lemma and the Mirsky
transfer}

\noindent\textbf{Lemma 4.} \emph{Let $\sigma_{XY}$ be a state,
$\dim Y = 2$, $\sigma_Y = I/2$, admitting a symmetric extension in
$Y$. With ordered eigenvalues $q_1 \ge q_2 \ge q_3 \ge \cdots$ and
$Q_\sigma = \sum_{i\ge4}q_i$:
$\sqrt{q_1} \le \sqrt{q_2}+\sqrt{q_3}+\sqrt{Q_\sigma}$.}

\medskip

\emph{Proof.} If $q_1 \le 1/2$: the square of the right side is at
least $q_2 + q_3 + Q_\sigma = 1 - q_1 \ge q_1$, done. Assume
$q_1 > 1/2$ and proceed in eight steps.

\textbf{(1) Compression preserves the extension.} Let $|e\rangle$ be a
leading eigenvector of $\sigma_{XY}$ and $X_0 \subseteq X$ the Schmidt
support of $|e\rangle$ on the $X$ side; since $Y$ is a qubit,
$\dim X_0 \le 2$. Let $\Pi = P_{X_0}\otimes I_Y$ and
$\omega = \Pi\sigma_{XY}\Pi$ (unnormalized). If $\sigma_{XYY'}$ is a
symmetric extension of $\sigma_{XY}$, then
$(P_{X_0}\otimes I_Y\otimes I_{Y'})\sigma_{XYY'}
(P_{X_0}\otimes I_Y\otimes I_{Y'})$ is positive, symmetric under the
$Y\leftrightarrow Y'$ swap (the projector acts only on $X$), and
reduces to $\omega$ on $XY$; so $\omega$ has a symmetric extension.

\textbf{(2) The leading eigenvalue survives.}
$|e\rangle \in X_0\otimes Y$ by construction, so
$\Pi|e\rangle = |e\rangle$ and
$\omega|e\rangle = \Pi\sigma\Pi|e\rangle = \Pi\sigma|e\rangle =
q_1|e\rangle$: the compressed operator keeps $|e\rangle$ as an
eigenvector with eigenvalue exactly $q_1$.

\textbf{(3) Interlacing caps the rest.} $\omega$ acts on the (at most)
four-dimensional space $X_0\otimes Y$; write its ordered eigenvalues
as $q_1, r_2, r_3, r_4$. The Cauchy interlacing inequalities for the
compression $\Pi\sigma\Pi$ give $0 \le r_j \le q_j$ for $j = 2,3,4$.

\textbf{(4) The compressed marginal obeys
$0 \preceq \omega_Y \preceq I/2$.}
$\omega_Y = \mathrm{Tr}_{X_0}(\Pi\sigma\Pi) \preceq
\mathrm{Tr}_X\sigma = \sigma_Y = I/2$, since removing the projector
only adds a positive operator. Equality is \emph{not} claimed:
compression generally shrinks the marginal, and the proof below uses
only the inequality.

\textbf{(5) The two-qubit criterion applies homogeneously.} The
theorem of Chen--Ji--Kribs--L\"utkenhaus--Zeng \cite{CJKLZ} states
that a two-qubit \emph{state} $\omega$ has a symmetric extension in
$Y$ if and only if
$\mathrm{Tr}\,\omega_Y^2 \ge \mathrm{Tr}\,\omega^2 -
4\sqrt{\det\omega}$. Both sides are homogeneous of degree two under
$\omega \mapsto \lambda\omega$, so the criterion holds verbatim for
the unnormalized positive operator $\omega$ (apply it to
$\omega/\mathrm{Tr}\,\omega$ and multiply through). By step (1)
$\omega$ has a symmetric extension, so the inequality holds.

\textbf{(6) Eliminating the marginal.} Let $\lambda, \mu$ be the
eigenvalues of $\omega_Y$; step (4) gives $\lambda, \mu \le 1/2$, and
$\lambda + \mu = \mathrm{Tr}\,\omega =: w$. Then
\[
\lambda^2 + \mu^2 \;\le\; \frac14 + \left(w - \frac12\right)^{2},
\]
because the difference of the two sides is
$2(\tfrac12-\lambda)(\tfrac12-\mu) \ge 0$. The eigenvalues of $\omega$
are $q_1, r_2, r_3, r_4$, so
$\mathrm{Tr}\,\omega^2 = q_1^2+r_2^2+r_3^2+r_4^2$ and
$\det\omega = q_1r_2r_3r_4$. Combining with step (5),
\[
\frac14 + \left(w-\frac12\right)^{2}
- \mathrm{Tr}\,\omega^2 + 4\sqrt{\det\omega}
\;\ge\;
\mathrm{Tr}\,\omega_Y^2 - \mathrm{Tr}\,\omega^2 + 4\sqrt{\det\omega}
\;\ge\; 0 .
\]
Doubling the left side and using the algebraic identity
$2(w-\tfrac12)^2 + \tfrac12 = w^2 + (1-w)^2$, this says exactly
$\Phi_{q_1}(r_2,r_3,r_4) \ge 0$, where, for $A = q_1$ and
$w = A+x+y+z$,
\[
\Phi_A(x,y,z) = (A{+}x{+}y{+}z)^2 - 2(A^2{+}x^2{+}y^2{+}z^2)
+ 8\sqrt{Axyz} + (1{-}A{-}x{-}y{-}z)^2 .
\]

\textbf{(7) Monotonicity replaces $r_j$ by $q_j$.} For $A > 1/2$ the
function $\Phi_A$ is coordinatewise nondecreasing on
$\{x,y,z \ge 0,\ A+x+y+z \le 1\}$: away from a zero coordinate,
\[
\partial_x \Phi_A
= 2(A{+}x{+}y{+}z) - 4x + 4\sqrt{Ayz/x} - 2(1{-}A{-}x{-}y{-}z)
= 4(A+y+z) - 2 + 4\sqrt{Ayz/x} ,
\]
which is positive because $A > 1/2$ gives $4A - 2 > 0$; the boundary
case of a vanishing coordinate follows by continuity, and the same
computation applies in $y$ and $z$. Step (3) gives $r_j \le q_j$ for
$j=2,3$ and $r_4 \le q_4 \le Q_\sigma$, so
\[
\Phi_{q_1}(q_2, q_3, Q_\sigma) \;\ge\;
\Phi_{q_1}(r_2, r_3, r_4) \;\ge\; 0 .
\]

\textbf{(8) Factorization.} The four arguments
$q_1, q_2, q_3, Q_\sigma$ sum to one, so the last square in $\Phi$
vanishes. Substituting $a = \sqrt{q_1}$, $b = \sqrt{q_2}$,
$c = \sqrt{q_3}$, $d = \sqrt{Q_\sigma}$, the remaining expression
factors exactly:
\[
\Phi = (a{+}b{+}c{-}d)(a{+}b{-}c{+}d)(a{-}b{+}c{+}d)
({-}a{+}b{+}c{+}d) .
\]
Since $q_1 > 1/2$ exceeds each of $q_2, q_3, Q_\sigma$ (they sum to
$1-q_1 < 1/2$), $a$ exceeds each of $b, c, d$, so the first three
factors are strictly positive. Nonnegativity of the product forces
$-a+b+c+d \ge 0$, which is the lemma. \hfill$\blacksquare$

\medskip

\noindent\textbf{Application and Mirsky transfer
(Proposition~\ref{prop:spectral}).} Let $M$ be the $8\times8$
coefficient matrix of $|\Phi_V\rangle$ across $AA'\,|\,BB'$ (so the
$p_\alpha$ are the squared singular values of $M$), and split
$M = S + E$ along the physical singlet/triplet decomposition of the
$AB$ qubit pair. The singlet weight of $W(f_0)$ is $f_0$, so
$\|E\|_F^2 = 1/200$ exactly.

The normalized state $P_{\mathrm{sym}}|\Phi_V\rangle/\|\cdot\|$
(projection onto the physical triplet subspace; its coefficient
matrix is $S/\|S\|_F$) satisfies
$\mathrm{Swap}_{AB}\,P_{\mathrm{sym}}|\Phi_V\rangle =
P_{\mathrm{sym}}|\Phi_V\rangle$, because the swap acts as the identity
on the symmetric subspace. Its physical reduction is the normalized
triplet part of $W(f_0)$, namely $P_{\mathrm{sym}}/3$, whose qubit
marginals are $I/2$. Let $\sigma_{A'A}$ be its reduction to $A'A$.
Tracing out $B'$ only, swap symmetry makes the resulting state on
$A'AB$ symmetric under exchanging the qubits $A$ and $B$; that state
is therefore a symmetric extension of $\sigma_{A'A}$ in its qubit
system $A$, and $\sigma_A = I/2$. Lemma~\ref{lem:spectral} applies
with $X = A'$ (dimension $4$), $Y = A$, and eigenvalues
$q_i = t_i^2/\|S\|_F^2$, where $t_1 \ge t_2 \ge \cdots$ are the
singular values of $S$ (the eigenvalues of the $A'A$ reduction of a
pure state are the squared Schmidt coefficients, i.e.\ the squared
singular values of the coefficient matrix). Multiplying the lemma
through by $\|S\|_F$:
\begin{equation}
t_1 - t_2 - t_3 \;\le\; \Bigl(\sum_{i\ge4}t_i^2\Bigr)^{1/2} .
\tag{A.7}\label{eq:lemS}
\end{equation}
Let $s_i = \sqrt{p_i}$ be the singular values of $M$. Mirsky's
inequality \cite{Mirsky} --- for any two matrices, the Euclidean
distance between their ordered singular-value vectors is at most the
Frobenius distance between the matrices; the statement we use is
reproduced, with attribution to Mirsky, as Theorem~2.3 of the freely
available survey of Markus \cite{Markus} --- gives
\begin{equation}
\sum_i (s_i - t_i)^2 \;\le\; \|M - S\|_F^2 = \frac{1}{200} .
\tag{A.8}\label{eq:mirsky}
\end{equation}
Set $x^2 = \sum_{i\le3}(s_i-t_i)^2$ and
$w^2 = \sum_{i\ge4}(s_i-t_i)^2$. Then, using \eqref{eq:lemS}, the
triangle inequality in $\ell^2$ for the tail
($\sqrt{\sum_{i\ge4}t_i^2} \le \sqrt{\sum_{i\ge4}s_i^2} + w
= \sqrt{Q} + w$), and Cauchy--Schwarz
($\sqrt3\,x + w \le 2\sqrt{x^2+w^2}$):
\begin{align*}
s_1 &\le t_1 + |s_1{-}t_1|
\le t_2 + t_3 + \sqrt{Q} + w + |s_1{-}t_1|\\
&\le s_2 + s_3 + \sqrt{Q} + w + \sqrt3\,x
\le s_2 + s_3 + \sqrt{Q} + 2\sqrt{x^2+w^2}.
\end{align*}
With \eqref{eq:mirsky},
$2\sqrt{x^2+w^2} \le 2\sqrt{1/200} = 1/\sqrt{50}$, which is
inequality \eqref{eq:spectral}. \hfill$\blacksquare$

\subsection{Tail cap, entropy merge, and the $p_1 \le 1/2$ case}

$Q = \sum_{\alpha=4}^{r} p_\alpha$ has at most five terms
($r \le 8$), each at most $p_3$ by ordering; hence
$0 \le Q \le 5p_3$, inequality \eqref{eq:tailcap}. For
\eqref{eq:entropymerge}: $h(x) = -x\log_2x$ is concave with
$h(0) = 0$, hence subadditive: $h(x+y) \le h(x) + h(y)$ for
$x, y \ge 0$, $x+y \le 1$; iterating over the tail terms,
$\sum_{\alpha\ge4}h(p_\alpha) \ge h(Q)$, so
$H(p) \ge h(p_1)+h(p_2)+h(p_3)+h(Q)$. Finally, if $p_1 \le 1/2$, then
the partial sums of $p$ are dominated by those of
$(\tfrac12,\tfrac12,0,\ldots,0)$ --- $p_1 \le \tfrac12$ and
$p_1 + p_2 \le 1$ --- so $p$ is majorized by it, and Schur concavity
of Shannon entropy gives $H(p) \ge H(\tfrac12,\tfrac12) = 1$.

\subsection{The endpoint isometry and its spectrum}

Feasibility: the map $V|i\rangle = |\Psi_i\rangle_{A'B'}$ sends an
orthonormal basis to four orthonormal vectors, so $V^\dagger V = I_4$
and $|\psi\rangle = (I_{AB}\otimes V)\sum_i\sqrt{\lambda_i}
|\Psi_i\rangle_{AB}|i\rangle_{A'}$ is a purification of $W(1/100)$.

Spectrum: the $AA'$ marginal $\rho_{AA'}$ of $|\psi\rangle$ is an
$8\times8$ positive matrix of rank at most four with entries in
$\mathbb{Q}[\sqrt{33}]$ after inserting $\sqrt{\lambda_0} = 1/10$ and
$\sqrt{\lambda_{1,2,3}} = \sqrt{33}/10$. The two-qubit ancillary Bell
vectors are embedded in the first $\mathbb{C}^2\otimes\mathbb{C}^2$
block of the $\mathbb{C}^4\otimes\mathbb{C}^4$ ancilla space of
Section~\ref{sec:reduction}, so in that basis
$\rho_{AA'} = \rho_{4\times4} \oplus 0_4$: the four padding dimensions
contribute four exact zero eigenvalues. This is the embedding the
verification program uses when it forms $\rho_{AA'}$. The claim is
that its nonzero spectrum is $\{s, s, s, 1-3s\}$ with
$s = (17-\sqrt{33})/200$. A multiset of at most four nonzero
eigenvalues is determined by its first four power sums (Newton's
identities), and the verification program checks, in exact arithmetic
over $\mathbb{Q}[\sqrt{33}]$,
\[
\mathrm{Tr}\,\rho_{AA'}^{\,k} \;=\; 3s^k + (1-3s)^k,
\qquad k = 1, 2, 3, 4,
\]
together with $3s + (1-3s) = 1$. This spectrum identification is
counted among the computer-assisted steps. The resulting feasible
value is $U_w = -3s\log_2 s - (1-3s)\log_2(1-3s)$, and the directed
rational certificate (Appendix~B) proves $U_w \le U_+$. The
certificate encloses $s$ in a dyadic bracket
$[s_{\mathrm{lo}}, s_{\mathrm{hi}}]$ and evaluates the entropy
expression at $s_{\mathrm{hi}}$; this yields an upper bound on $U_w$
because the map $s \mapsto -3s\log_2 s - (1-3s)\log_2(1-3s)$ is
increasing on $0 < s < 1/4$ --- its derivative is
$3\log_2\!\bigl((1-3s)/s\bigr) > 0$ there since $1-3s > 1/4 > s$ ---
and the certified bracket for
$s = (17-\sqrt{33})/200 \approx 0.0563$ lies inside $(0, 1/4)$.

\subsection{The symmetric-support endpoint}

\noindent\textbf{Claim.} $E_P(W(0)^{\otimes n}) = n$ for every
$n \ge 1$; hence $E_P^\infty(W(0)) = 1$.

\medskip

\emph{Proof.} $W(0) = P_{\mathrm{sym}}/3$ is supported on the
symmetric subspace and has marginals $S(A) = S(B) = 1$ (its qubit
marginal is $I/2$).

\emph{Lower bound.} A state supported on the symmetric subspace has a
purification invariant under the physical swap $A \leftrightarrow B$
(purify $P_{\mathrm{sym}}/3$ by its canonical purification
$\sum_i\sqrt{1/3}\,|v_i\rangle_{AB}|i\rangle_E$ with
$\{|v_i\rangle\}$ an orthonormal basis of the symmetric subspace; the
swap fixes each $|v_i\rangle$, hence the whole vector). Every
extension $\omega^{ABE'}$ of $W(0)$ arises from that purification by
a channel acting on the purifying system alone, which commutes with
the swap on $AB$; so every extension satisfies
$\mathrm{Swap}_{AB}\,\omega\,\mathrm{Swap}_{AB} = \omega$, and in
particular $S(E'|A)_\omega = S(E'|B)_\omega$, where
$S(E'|A) = S(AE') - S(A)$ is the conditional entropy. Weak
monotonicity of the von Neumann entropy (equivalent to strong
subadditivity) states $S(E'|A) + S(E'|B) \ge 0$; by the symmetry just
derived, $S(E'|A) \ge 0$, i.e.\ $S(AE') \ge S(A) = 1$. Now let
$|\Phi\rangle$ on $AA'BB'$ be any purification of $W(0)$ and take
$E' = A'$: the reduction to $ABA'$ is an extension, so
$S(AA')_\Phi \ge 1$, giving $E_P(W(0)) \ge 1$.

\emph{Upper bound.} The purification with trivial $B'$ has
$S(AA') = S(BB') = S(B) = 1$ (purity across the cut), so
$E_P(W(0)) \le 1$.

\emph{Tensor powers.} $W(0)^{\otimes n}$ is supported on the $n$-fold
tensor product of symmetric subspaces, which is contained in the
fixed-point space of the composite swap
$\mathrm{Swap}_{A^nB^n} = \bigotimes_{i}\mathrm{Swap}_{A_iB_i}$
exchanging the entire $A$-string with the entire $B$-string. The
argument above applies verbatim with $A \to A^n$, $B \to B^n$ and
$S(A^n) = n$, giving $E_P(W(0)^{\otimes n}) = n$. The general
statement, for any state supported on a symmetric or antisymmetric
subspace with $|A| = |B|$, is Proposition~7 of Christandl--Winter
\cite{CW}. \hfill$\blacksquare$

\renewcommand{\thesection}{A\texorpdfstring{$'$}{'}}
\setcounter{subsection}{0}
\section{The finite-ensemble inequality for
\texorpdfstring{$E_P^\infty$}{EP-infinity}}

\noindent\textbf{Lemma A$'$.} \emph{For every finite ensemble
$\rho = \sum_{i=1}^{k} p_i\rho_i$ of bipartite states on $AB$,}
\[
E_P^\infty(\rho) \;\le\; \sum_i p_i\,E_P^\infty(\rho_i) + \chi,
\qquad
\chi = S(\rho) - \sum_i p_i S(\rho_i) \;\ge 0.
\]
\emph{Equivalently, $E_P^\infty - S$ is convex on ensembles.}

\medskip

This inequality is due to Chen and Winter (\cite{ChenWinter},
Theorem~3 and Corollary~4); their preprint presents the proof as a
summary outline resting on published covering and operational
results. Since \cite{ChenWinter} is unpublished and this inequality
is load-bearing here, we give a complete proof from the following
published ingredients.

\medskip

\noindent\textbf{(I1) Operational identity} (\cite{THLD}).
$E_P^\infty(\rho) = E_{LOq}(\rho)$, where $E_{LOq}(\rho)$ is the
least $R$ such that for every $\epsilon > 0$ and all large $n$ there
is a protocol using $\lceil nR \rceil$ maximally entangled qubit
pairs (``ebits''), local operations, and $o(n)$ \emph{qubits of
quantum communication} --- that is the class LOq as defined in
\cite{THLD} --- whose output is within trace distance $\epsilon$ of
$\rho^{\otimes n}$. Sending classical bits is a special case of
sending qubits, so a protocol that uses only $o(n)$ classical bits,
as ours does ($O(\sqrt n)$ bits), is admissible in LOq. (The converse
direction of this identity --- no protocol can beat $E_P^\infty$ ---
uses the asymptotic continuity of $E_P$, proved in \cite{THLD}; we
use the identity as a black box in both directions.)

\medskip

\noindent\textbf{(I2) Entanglement dilution at entropy rate with
sublinear communication} (\cite{LoPopescu}). For any bipartite pure
state $|\psi\rangle$, the state $|\psi\rangle^{\otimes m}$ can be
prepared from $m(S(\psi_A) + \delta)$ ebits by local operations and
$O(\sqrt m)$ classical bits, with trace-distance error vanishing as
$m \to \infty$. (This achievability statement is \cite{LoPopescu}.
The companion results \cite{HarrowLo, HaydenWinter} are converses:
\cite{HarrowLo} proves a matching $\Omega(\sqrt m)$ \emph{lower}
bound on the classical communication of dilution, and
\cite{HaydenWinter} treats communication costs of entanglement
transformations in general. They show (I2) is tight but are not used
in the construction.)

\medskip

\noindent\textbf{(I3) Typicality estimates} (standard; in the forms
used by \cite{WinterCMP, WinterIT}). For a state $\tau$ with spectrum
$\{\lambda_j\}$ and $\delta' > 0$, the typical projector
$\Pi^n_{\tau,\delta'}$ (onto eigenvectors of $\tau^{\otimes n}$ with
eigenvalue in $[2^{-n(S(\tau)+\delta')}, 2^{-n(S(\tau)-\delta')}]$)
satisfies, for $n$ large:
$\mathrm{Tr}(\tau^{\otimes n}\Pi) \ge 1-\epsilon$;
$\mathrm{rank}\,\Pi \le 2^{n(S(\tau)+\delta')}$; and
$\Pi\,\tau^{\otimes n}\Pi \le 2^{-n(S(\tau)-\delta')}\Pi$. For a
string $i^n = i_1\ldots i_n$ that is typical for the distribution $p$
(letter frequencies within $\delta'$ of $p$), the conditional state
$\rho_{i^n} = \rho_{i_1}\otimes\cdots\otimes\rho_{i_n}$ has a
conditional typical projector $\Pi_{i^n}$ with the analogous three
properties around $\bar S = \sum_i p_iS(\rho_i)$ (up to a
$\delta'$-linear correction).

\medskip

\noindent\textbf{(I4) Gentle operator lemma} (\cite{WinterIT};
subnormalized form as in \cite{WinterCMP}). If $\tau \ge 0$ with
$\mathrm{Tr}\,\tau \le 1$ and $0 \le \Lambda \le I$ satisfy
$\mathrm{Tr}(\tau\Lambda) \ge \mathrm{Tr}\,\tau - \kappa$, then
$\|\tau - \sqrt\Lambda\,\tau\sqrt\Lambda\|_1 \le 2\sqrt\kappa$. For a
normalized state with $\mathrm{Tr}(\tau\Lambda) \ge 1-\epsilon$ this
is the usual statement with bound $2\sqrt\epsilon$. (\cite{WinterIT}
proves this with the constant $\sqrt{8\kappa}$, extended verbatim to
subnormalized $\tau$ in \cite{WinterCMP}, Lemma~4; that constant
would also suffice everywhere below. The sharper constant
$2\sqrt\kappa$ used here has a four-line proof, which we include to
keep the constant self-contained. Write
$\tau - \sqrt\Lambda\,\tau\sqrt\Lambda
= (I-\sqrt\Lambda)\,\tau + \sqrt\Lambda\,\tau\,(I-\sqrt\Lambda)$. By
the Cauchy--Schwarz inequality $\|AB\|_1 \le \|A\|_2\|B\|_2$ for the
trace norm, splitting $\tau = \sqrt\tau\cdot\sqrt\tau$,
$\|(I-\sqrt\Lambda)\tau\|_1 \le
\bigl(\mathrm{Tr}[(I-\sqrt\Lambda)^2\tau]\bigr)^{1/2}
(\mathrm{Tr}\,\tau)^{1/2}$ and
$\|\sqrt\Lambda\,\tau(I-\sqrt\Lambda)\|_1 \le
\bigl(\mathrm{Tr}[\Lambda\tau]\bigr)^{1/2}
\bigl(\mathrm{Tr}[(I-\sqrt\Lambda)^2\tau]\bigr)^{1/2}$. Since
$0 \le \Lambda \le I$ gives $\Lambda \le \sqrt\Lambda$, we have
$(I-\sqrt\Lambda)^2 = I - 2\sqrt\Lambda + \Lambda \le I - \Lambda$,
so $\mathrm{Tr}[(I-\sqrt\Lambda)^2\tau] \le
\mathrm{Tr}\,\tau - \mathrm{Tr}(\tau\Lambda) \le \kappa$, while
$\mathrm{Tr}\,\tau \le 1$ and $\mathrm{Tr}[\Lambda\tau] \le 1$; each
term is at most $\sqrt\kappa$.)

\medskip

\noindent\textbf{(I5) Operator sampling (operator Chernoff bound)}
(\cite{AW}; applied in covering form in \cite{WinterCMP}, Theorem~2,
and \cite{GPW}, Proposition~II.2). Let $X_1,\ldots,X_K$ be i.i.d.\
random operators on a $d$-dimensional space with $0 \le X_j \le I$
and mean $M = \mathbb E X_j \ge a\,\Pi'$ on the support projector
$\Pi'$ of $M$, $a > 0$. Then for $0<\epsilon<1$,
\[
\Pr\!\left[\ \frac1K\sum_j X_j \not\in
[(1-\epsilon)M,\ (1+\epsilon)M]\ \right]
\;\le\; 2d\,\exp\!\left(-\frac{K\epsilon^2 a}{2\ln 2}\right).
\]

\medskip

\noindent\textbf{Proof of Lemma A$'$.} Fix
$\delta, \delta', \epsilon > 0$. Write
$\rho^{\otimes n} = \sum_{i^n} p_{i^n}\rho_{i^n}$, the sum over all
strings with product weights. Let $T$ be the set of type-typical
strings; $p(T) \ge 1-\epsilon$ for large $n$, and each $i^n \in T$
has letter counts $k_i(i^n) \le n(p_i + \delta')$.

\emph{Step 0: expurgation.} Let $\Pi$ be the typical projector of
$\rho^{\otimes n}$ (I3). Type-typicality of a string does \emph{not}
by itself imply that $\Pi$ accepts its conditional state --- a string
whose type is tilted by $O(\delta')$ can have all its conditional
surprisal outside the global window, so no uniform acceptance
statement over $T$ is available. What (I3) does give is acceptance on
average:
$\sum_{i^n} p_{i^n}\,\mathrm{Tr}(\rho_{i^n}\Pi)
= \mathrm{Tr}(\rho^{\otimes n}\Pi) \ge 1-\epsilon$. Since each term
is at most $1$, Markov's inequality applied to
$1 - \mathrm{Tr}(\rho_{i^n}\Pi)$ gives $p(G) \ge 1-\sqrt\epsilon$ for
the ``globally good'' set
$G = \{\,i^n : \mathrm{Tr}(\rho_{i^n}\Pi) \ge 1-\sqrt\epsilon\,\}$.
Work from now on with the expurgated set
$\widetilde T = T \cap G$: it has
$p(\widetilde T) \ge 1 - \epsilon - \sqrt\epsilon$, retains the
letter-count bounds of $T$, and adds the \emph{uniform} acceptance
$\mathrm{Tr}(\rho_{i^n}\Pi) \ge 1-\sqrt\epsilon$ for every
$i^n \in \widetilde T$. Let
$q_{i^n} = p_{i^n}/p(\widetilde T)$ be the conditional distribution
on $\widetilde T$.

\emph{Step 1: cut-down states.} Let $\Pi_{i^n}$ be the conditional
typical projectors (I3); conditional typicality is a per-string
statement, so $\mathrm{Tr}(\rho_{i^n}\Pi_{i^n}) \ge 1-\epsilon$
uniformly over type-typical strings for large $n$. Define
$\rho'_{i^n} = \Pi\,\Pi_{i^n}\rho_{i^n}\Pi_{i^n}\Pi$ for
$i^n \in \widetilde T$. Two applications of (I4) give uniform
gentleness on $\widetilde T$. First,
$\|\rho_{i^n} - \tau_{i^n}\|_1 \le 2\sqrt\epsilon$ with
$\tau_{i^n} = \Pi_{i^n}\rho_{i^n}\Pi_{i^n}$. Second, the
subnormalized $\tau_{i^n}$ satisfies
$\mathrm{Tr}(\tau_{i^n}\Pi) \ge \mathrm{Tr}(\rho_{i^n}\Pi) -
\|\rho_{i^n}-\tau_{i^n}\|_1 \ge 1 - 3\sqrt\epsilon$, so
$\mathrm{Tr}\,\tau_{i^n} - \mathrm{Tr}(\tau_{i^n}\Pi) \le
3\sqrt\epsilon$ and (I4) in its subnormalized form gives
$\|\tau_{i^n} - \Pi\tau_{i^n}\Pi\|_1 \le 2\sqrt{3\sqrt\epsilon}$.
Hence
\[
\|\rho_{i^n} - \rho'_{i^n}\|_1 \;\le\; \zeta(\epsilon)
:= 2\sqrt\epsilon + 2\sqrt{3\sqrt\epsilon}
\;\xrightarrow[\epsilon\to0]{}\; 0,
\qquad\text{uniformly over } i^n \in \widetilde T .
\]
For the operator norm: (I3) gives
$0 \le \tau_{i^n} \le 2^{-n(\bar S - c\delta')}\,I$ for a constant
$c$ depending only on the ensemble; compressing by $\Pi$ preserves
this bound,
$\rho'_{i^n} = \Pi\tau_{i^n}\Pi \le 2^{-n(\bar S - c\delta')}\,\Pi$,
so $\|\rho'_{i^n}\| \le 2^{-n(\bar S - c\delta')}$. (Note that the
Loewner comparison $\Pi\tau\Pi \preceq \tau$ is false in general;
only the compression of the scalar bound is used.) Consequently
$\rho^{(n)\prime} = \sum_{i^n\in \widetilde T} q_{i^n}\rho'_{i^n}$
satisfies
$\|\rho^{\otimes n} - \rho^{(n)\prime}\|_1 \le \zeta'(\epsilon)
:= \zeta(\epsilon) + 2(\epsilon+\sqrt\epsilon) \to 0$ (the second
term is the mass renormalization from conditioning on
$\widetilde T$), and its support lies in the range of $\Pi$, of
dimension $d \le 2^{n(S(\rho)+\delta')}$.

\emph{Step 2: restriction to large eigenvalues.} Let $\Pi'$ project
onto the eigenvectors of $\rho^{(n)\prime}$ with eigenvalue at least
$\epsilon/\mathrm{Tr}\,\Pi \ge \epsilon\,2^{-n(S(\rho)+\delta')}$.
The discarded weight is at most $\epsilon$, so with
$\sigma_{i^n} = \Pi'\rho'_{i^n}\Pi'$ and
$\tilde\rho = \sum_{i^n\in \widetilde T}q_{i^n}\sigma_{i^n}$ we keep
$\|\rho^{\otimes n} - \tilde\rho\|_1 \le \zeta''(\epsilon) \to 0$
(using (I4) once more), while on its support
$\tilde\rho \ge \epsilon\,2^{-n(S(\rho)+\delta')}\,\Pi'$ and each
$\|\sigma_{i^n}\| \le 2^{-n(\bar S - c\delta')}$.

\emph{Step 3: sampling.} Rescale:
$X_j = \sigma_{I^n(j)}\, 2^{n(\bar S - c\delta')}$, where
$I^n(1),\ldots,I^n(K)$ are drawn i.i.d.\ from $q$; then
$0 \le X_j \le I$ and the mean is
$M = \tilde\rho\,2^{n(\bar S - c\delta')} \ge a\Pi'$ with
$a = \epsilon\,2^{-n(\chi + (1+c)\delta')}$, where
$\chi = S(\rho) - \bar S$. Choose
\[
K = 2^{\,\lceil n(\chi + (3+c)\delta')\rceil} ,
\]
a power of two, so that
$\log_2 K = \lceil n(\chi+(3+c)\delta')\rceil \le
n(\chi+(3+c)\delta') + 1$. Then $K a \ge \epsilon\,2^{2n\delta'}$
(enlarging $K$ only helps the Chernoff bound), so the failure
probability in (I5) is at most
$2d\exp(-\epsilon^3 2^{2n\delta'}/(2\ln2))$, which tends to zero (the
Gaussian-type decay beats the exponential dimension factor
$d \le 2^{n(S+\delta')}$). Hence for large $n$ there \emph{exist}
strings $i^n(1),\ldots,i^n(K) \in \widetilde T$ with
\[
\left\|\frac1K\sum_{j=1}^{K}\sigma_{i^n(j)} - \tilde\rho\right\|_1
\le \epsilon\,\mathrm{Tr}\,\tilde\rho \le \epsilon ,
\]
(trace-norm form of the two-sided operator bound: if
$-\epsilon\tilde\rho \le \Delta \le \epsilon\tilde\rho$ then
$\|\Delta\|_1 \le \epsilon\,\mathrm{Tr}\,\tilde\rho$, by splitting
the trace along the positive part of $\Delta$).

It remains to undo the cut-downs on the \emph{sampled} strings.
Step~1 bounds $\|\rho_{i^n} - \rho'_{i^n}\|_1 \le \zeta$ uniformly on
$\widetilde T$, but the further cut by $\Pi'$ need not be uniformly
gentle for every individual sampled string; it is gentle on average.
Taking traces in the operator sandwich,
$\frac1K\sum_j \mathrm{Tr}\,\sigma_{i^n(j)} \ge
(1-\epsilon)\,\mathrm{Tr}\,\tilde\rho \ge (1-\epsilon)(1-\zeta'')$,
while each $\mathrm{Tr}\,\rho'_{i^n(j)} \le 1$; so the average trace
lost to $\Pi'$, namely $\kappa := \frac1K\sum_j
\bigl(\mathrm{Tr}\,\rho'_{i^n(j)} - \mathrm{Tr}\,\sigma_{i^n(j)}
\bigr)$, is at most $1 - (1-\epsilon)(1-\zeta'')$. By the gentle
operator lemma (I4) applied to each subnormalized $\rho'_{i^n(j)}$
and concavity of the square root,
\[
\frac1K\sum_j\bigl\|\rho'_{i^n(j)} - \sigma_{i^n(j)}\bigr\|_1
\;\le\; \frac1K\sum_j 2\sqrt{\mathrm{Tr}\,\rho'_{i^n(j)} -
\mathrm{Tr}\,\sigma_{i^n(j)}}
\;\le\; 2\sqrt{\kappa} .
\]
Combining the last three displays by the triangle inequality and
convexity of the trace norm,
\begin{equation}
\left\|\frac1K\sum_{j=1}^{K}\rho_{i^n(j)} - \rho^{\otimes n}
\right\|_1 \;\le\; \zeta + 2\sqrt\kappa + \epsilon + \zeta''
\;=:\; \theta(\epsilon) \;\xrightarrow[\epsilon\to0]{}\; 0 .
\tag{A$'$.1}\label{eq:sampled}
\end{equation}

\emph{Step 4: the protocol.} To create $\rho^{\otimes n}$
approximately:
\begin{enumerate}
\item \emph{Shared randomness.} Consume
  $\log_2 K = \lceil n(\chi + (3+c)\delta')\rceil \le
  n(\chi + (3+c)\delta') + 1$ ebits; each party measures their half
  in the computational basis, yielding perfectly correlated uniform
  bits. Because $K$ is a power of two, these bits select a common
  index $j \in \{1,\ldots,K\}$ that is \emph{exactly} uniform --- no
  surplus outcomes and no bias to correct.
\item \emph{Block preparation.} The selected string
  $i^n(j) \in \widetilde T$ contains letter $i$ at most
  $n(p_i+\delta')$ times. For each letter $i$, pick a \emph{fixed}
  regularization block size $m_i$ with
  $E_P(\rho_i^{\otimes m_i})/m_i \le E_P^\infty(\rho_i) + \delta/2$
  (possible since the infimum equals the limit), and let
  $|\psi_i\rangle$ be an optimal purification of
  $\rho_i^{\otimes m_i}$. Use dilution (I2) to share
  $\lceil k_i/m_i\rceil$ copies of $|\psi_i\rangle$ --- this
  \emph{growing} copy count, which plays the role of $m$ in (I2), is
  distinct from the fixed block size $m_i$ --- at ebit rate
  $E_P(\rho_i^{\otimes m_i})/m_i \le E_P^\infty(\rho_i) + \delta/2$
  per copy of $\rho_i$, plus $\delta/2$ dilution slack, with
  $O(\sqrt n)$ communication; discarding the ancillas of each
  $|\psi_i\rangle$ leaves (an approximation to) the block
  $\rho_i^{\otimes k_i}$. Rounding $k_i$ up to a multiple of $m_i$
  wastes at most $m_i - 1 = O(1)$ copies per letter.
\item \emph{Forgetting the index.} Discard the randomness register.
  The output, averaged over $j$, is within
  $\theta(\epsilon) + (\text{dilution errors, vanishing in } n)$ of
  $\rho^{\otimes n}$ in trace norm, by \eqref{eq:sampled} and
  convexity of the trace norm.
\end{enumerate}

\emph{Step 5: rates and limits.} The total ebit consumption is at
most
\[
n(\chi + (3+c)\delta') + 1 \;+\;
\sum_i n(p_i + \delta')\left(E_P^\infty(\rho_i) + \delta\right)
+ O(1),
\]
and the classical communication is $O(\sqrt n) = o(n)$. Dividing by
$n$ and letting $n \to \infty$, then $\epsilon \to 0$, then
$\delta, \delta' \to 0$:
\[
E_{LOq}(\rho) \;\le\; \chi + \sum_i p_i\,E_P^\infty(\rho_i) .
\]
By (I1), $E_P^\infty(\rho) = E_{LOq}(\rho)$, which is the lemma; the
stated equivalent form follows by substituting the definition of
$\chi$ and moving $S(\rho)$ to the left side. \hfill$\blacksquare$

\medskip

\emph{Attribution.} The construction is exactly the one outlined by
Chen and Winter \cite{ChenWinter}; the contribution of this appendix
is only to write every step, error term, and rate out of the
published ingredients (I1)--(I5), so that the main theorem rests on
no unpublished proof.

\renewcommand{\thesection}{B}
\setcounter{subsection}{0}
\section{Certificate semantics and the supplementary artifact}

\begingroup\sloppy
The artifact consists of four files: \texttt{verify\_leaves.py},
\texttt{verify\_upper\_chain.py}, a \texttt{README.md} with run
instructions and expected outputs, and \texttt{MANIFEST.sha256}. Both
programs use only the Python standard library, make every accepting
decision by comparing integers or exact rationals, and contain no
\texttt{assert} statements (so interpreter optimization flags cannot
remove a check). The programs verify the finite calculations; the
paper proves why those calculations apply to every purification.
\par\endgroup

\subsection{Exact arithmetic primitives}

\emph{Outward intervals.} Every quantity over a box is enclosed in an
interval with exact rational endpoints; sums, differences, products,
squares, and cubes propagate endpoints outward (away from the true
range). An interval comparison such as ``$G_{\mathrm{hi}} < 0$'' is a
single exact rational comparison.

\emph{Square roots.} $\sqrt v$ for rational $v \ge 0$ is enclosed by
consecutive dyadic rationals
$\ell = m/2^k \le \sqrt v \le (m{+}1)/2^k = u$ (with $k = 48$ bits in
the leaf verifier's generation pass, $k = 96$ in its audit pass, and
$k = 256$ in the upper-chain verifier), certified by the integer
comparisons $\ell^2 \le v \le u^2$. The upper-chain verifier
additionally handles $\sqrt{33}$ \emph{symbolically}, computing power
sums exactly in $\mathbb{Q}[\sqrt{33}]$; its dyadic brackets enter
only the entropy bound (B.3, item 3).

\emph{Logarithms.} For $r \in [1,2]$, $\ln r$ is enclosed by the
positive series
\[
\ln r = 2\sum_{j=0}^{m-1}\frac{z^{2j+1}}{2j+1} + R_m,
\qquad z = \frac{r-1}{r+1},
\qquad
0 \le R_m \le \frac{2\,z^{2m+1}}{(2m+1)(1-z^2)},
\]
with the exact rational remainder bound added to the upper endpoint
($m = 10$ terms in the leaf verifier's generation pass, $m = 18$ in
its audit pass, and $m = 160$ in the upper-chain verifier). General
arguments are range-reduced by powers of two. A certified lower bound
on $h(v) = v\log_2(1/v)$ for rational $v \in (0,1)$ writes
$1/v = 2^e r$ with $r \in [1,2)$ and uses
$h(v) \ge v\,(e + \ln_{\mathrm{lo}}(r)/\ln_{\mathrm{hi}}(2))$; since
$h$ is concave, its minimum over an interval is attained at an
endpoint, so interval lower bounds need only the two endpoint
evaluations.

\subsection{The leaf verifier}

\texttt{verify\_leaves.py} reconstructs the entire subdivision from
the split rule alone --- no leaf file is read. Starting from the root
box $[1/2,1]\times[0,1/2]\times[0,1/3]$, each box is classified by
the six predicates of Section~\ref{sec:certificate} (in the order
\texttt{qneg}, \texttt{order23}, \texttt{tailcap},
\texttt{orientation}, \texttt{spectral}, \texttt{entropy}, using the
generation-pass precision); an unclassified box is split at the exact
midpoint of its widest coordinate, first coordinate winning ties, up
to a depth guard of $60$. For a box whose tail-weight interval
crosses zero, the range of $Q$ is intersected with $[0,1]$ before
$\sqrt{Q}$ or $h(Q)$ is evaluated --- legitimate because those terms
are only used against spectra with true $Q \in [0,1]$, while the
\texttt{qneg} predicate fires only when $Q_{\mathrm{hi}} < 0$ (the
worked leaf of Section~\ref{sec:certificate} shows this clipping in
action). The program then (i) rechecks every leaf's recorded
predicate at the higher audit precision; (ii) verifies that the leaf
paths are prefix-free, that every internal node has both children,
and that the Kraft sum $\sum 2^{-\mathrm{depth}}$ is exactly $1$ ---
which together prove that the closed leaf boxes cover the root box;
(iii) checks the node count ($50{,}765$), leaf count ($25{,}383$),
maximum depth ($38$), and the class counts of
Section~\ref{sec:certificate}; and (iv) recomputes the SHA-256 hash
of the canonical leaf stream. That stream --- one JSON line per leaf
with its path, depth, predicate class, and exact box --- is
regenerated in memory and bound by the expected hash
\begin{center}
\footnotesize
\texttt{9b06e0fc680ea7ed241e53178297e75b7bce18b7e04f7d9760c5ab14510a26a1}
\end{center}
it is not a separately shipped file. The program prints \texttt{PASS}
and the summary counts only if every check succeeds.

The program also accepts three command-line options that deliberately
corrupt one component of the certificate (a wrong correlation
constant, a wrong entropy threshold, a truncated leaf set); the
verifier rejects each corrupted test case. These runs demonstrate
that the checks are effective, not decorative.

\subsection{The upper-chain verifier}

\texttt{verify\_upper\_chain.py} checks, in exact arithmetic:
\begin{enumerate}
\item the endpoint isometry: the Gram matrix of the four mapped Bell
  columns equals $I_4$;
\item the spectrum: $\mathrm{Tr}\,\rho_{AA'}^k = 3s^k + (1-3s)^k$ for
  $k = 1,\ldots,4$ symbolically over $\mathbb{Q}[\sqrt{33}]$
  (elements as exact rational pairs $a + b\sqrt{33}$; no rounding
  anywhere in the power sums);
\item the first directed certificate $U_w \le U_+$ (equation
  \eqref{eq:Uplus}), using $256$-bit dyadic square-root brackets and
  $160$-term logarithm series (B.1) --- the only place dyadic
  enclosures of $\sqrt{33}$ enter;
\item the exact mixture identity
  $W(1/200) = \tfrac12W(0) + \tfrac12W(1/100)$, entry by entry over
  exact rationals, and the $f{=}0$ identities
  $W(0) = P_{\mathrm{sym}}/3$, $\mathrm{Tr}_B W(0) = I/2$;
\item the second directed certificate: the right side of
  \eqref{eq:chain}, assembled from the three exact entropy
  expressions, is strictly below $T_+$;
\item the exact margin
  \[
  \tfrac{97}{100} - T_+ =
  \frac{3673584061412026000906031263559971238115457039892003881773}
  {10^{60}} > 0 .
  \]
\end{enumerate}

\subsection{Reproducibility}

Running \texttt{python3 verify\_leaves.py} and
\texttt{python3 verify\_upper\_chain.py} from the artifact directory
prints \texttt{PASS} with the summary values quoted above (observed
wall times about $5.6$ and $1.0$ seconds under Python 3.13). The
\texttt{README.md} lists SHA-256 hashes of both programs and of their
complete expected standard outputs, so an independent run can be
compared byte for byte:

\begin{center}
\footnotesize
\begin{tabular}{@{}l@{}}
\toprule
\texttt{verify\_leaves.py}\\
\quad\texttt{2561d18d06d107b9d12648bd458385f704fe6bd2c22e298bdfee0e7f8ef96a29}\\[2pt]
\texttt{verify\_upper\_chain.py}\\
\quad\texttt{e296d111b6e8cff10e195042f313ace3728e37f8575dac79e9c6b3e668581bf9}\\[2pt]
output of \texttt{verify\_leaves.py}\\
\quad\texttt{495dc02cf43d77ef5da2336a86aacc48d62297a579d28205a130564273e21b66}\\[2pt]
output of \texttt{verify\_upper\_chain.py}\\
\quad\texttt{3b07fb96f5467a85f311bef0978b38c474790e41a0c165d7d93c2da8064dc7ca}\\
\bottomrule
\end{tabular}
\end{center}

\end{document}